\def\idty{{\leavevmode\rm 1\mkern -5.4mu I}} 
\def\id{{\rm id}}                            
\def\Nl{{\mathbb N}}     
\def\norm #1{\Vert #1\Vert}
\def\bra #1{\langle #1\vert}
\def\ket #1{\vert #1\rangle}
\def\tr{{\rm Tr}}
\newcommand\Scp[2]{\ensuremath{\, \langle #1 \,\vert #2 \,\rangle}}
\newcommand*{\vphi}{\varphi}
\newcommand{\eqdef}{:=}
\newcommand{\sdp}{\mathrm{sdp}}
\newcommand{\channel}{\mathrm{S}}
\newcommand{\game}{\mathrm{\omega}}
\begin{document}

\title{Quantum bilinear optimization}

\author{Mario Berta}
\address{Institute for Quantum Information and Matter, Caltech, Pasadena, CA 91125, USA}
\email{berta@caltech.edu}

\author{Omar Fawzi}
\address{
Department of Computing and Mathematical Sciences, Caltech,
   Pasadena, CA 91125, USA \newline
LIP\footnote{UMR 5668 LIP - ENS Lyon - CNRS - UCBL - INRIA, Universit\'{e} de Lyon}, \'{E}cole Normale Sup\'{e}rieure de Lyon, Lyon, 69007, France}
\email{omar.fawzi@ens-lyon.fr}

\author{Volkher B.~Scholz}
\address{Institute for Theoretical Physics, ETH Zurich, 8093 Z\"urich, Switzerland}
\email{scholz@phys.ethz.ch}

\begin{abstract}
We study optimization programs given by a bilinear form over non-commutative variables subject to linear inequalities. Problems of this form include the entangled value of two-prover games, entanglement-assisted coding for classical channels and quantum-proof randomness extractors. We introduce an asymptotically converging hierarchy of efficiently computable semidefinite programming (SDP) relaxations for this quantum optimization. This allows us to give upper bounds on the quantum advantage for all of these problems. Compared to previous work of Pironio, Navascu{\'e}s and Ac{\'\i}n, our hierarchy has additional constraints. By means of examples, we illustrate the importance of these new constraints both in practice and for analytical properties. Moreover, this allows us to give a hierarchy of SDP outer approximations for the completely positive semidefinite cone introduced by Laurent and Piovesan.
\end{abstract}

\maketitle


\section{Introduction}\label{sec:intro}

\subsection{Setting}\label{sec:setting}

A major goal in quantum information theory is to understand the advantage over classical protocols that can be achieved by allowing quantum protocols. For a given information processing task, identifying the optimal success rate for this task can be seen as an optimization over the set of valid protocols. The quantum advantage is then defined as the increase in the optimal value by allowing a larger set of protocols that make use of quantum theory. A family of tasks for which such an advantage is very well-studied is the family of games between multiple parties that are not allowed to communicate. As was first demonstrated by Bell~\cite{Bel64,CHSH}, there exist games for which entanglement between the players can increase the success probability beyond the ultimate limit of classical protocols. The fundamental limit for classical protocols is called a Bell inequality and its violation indicates an important feature of quantum theory called non-locality. The topic of non-locality has been a very active topic in quantum information theory and in the foundations of quantum mechanics; see~\cite{ReviewNL} for a review. A quantum advantage can also be studied in many other settings including communication complexity~\cite{ReviewCC}, communication over a classical channel~\cite{CLMW10,PLMKR11} or randomness extractors~\cite{Renner05}. One objective of this paper is to formulate many of these problems in a unified language as bilinear optimization programs.

To make the discussion more concrete, we consider a specific example. Let $W_{X \to Y}$ be a noisy channel mapping system $X$ to system $Y$. Assuming $X$ and $Y$ are discrete systems, we can describe the channel by the transition probabilities  $W_{X \to Y}(y | x)$ from $x$ to $y$ for all $(x,y) \in X \times Y$. The goal is to send $k$ bits of information using this channel while minimizing the error probability for the decoding. A valid protocol in this setting is given by an encoding function $e : [2^k] \to X$ and a decoding function $d : Y \to [2^k]$. To take into account the possibility of a randomized functions, we describe the encoder by a probability distribution $\{e(x|i)\}_{x}$ on $X$ for every possible input $i \in [2^k]$, and similarly the decoder by a distribution $\{d(i|y)\}_{i}$ on $[2^k]$ for every $y \in Y$. Given an encoder and a decoder, the average success probability of our protocol can be expressed as
$\frac{1}{2^k} \sum_{x,y,i} d(i|y) W_{X \to Y}(y|x) e(x|i)$. In summary, optimizing the success probability of information transmission is captured by the following bilinear program
\begin{align}\label{eq:channel_classical}
\begin{aligned}
& \underset{(e,d)}{\text{maximize}}
& & \frac{1}{2^k} \sum_{x,y,i} W_{X \to Y}(y|x) d(i|y)e(x|i) \\
& \text{subject to}
& &\sum_{x} e(x|i) = 1 \quad \forall i \in [2^k] \\
& 
& &\sum_{i} d(i|y) = 1 \quad \forall y \in Y \\
& 
& & 0 \leq e(x|i) \leq 1 \quad \forall (x,i) \in X \times [2^k] \\ 
&
& & 0 \leq d(i|y) \leq 1 \quad \forall (i,y) \in [2^k] \times Y\,.
\end{aligned}
\end{align}
Observe that allowing the encoder and the decoder to access (unlimited) shared randomness does not change the optimal value of this program. A fundamental question is to study the effected of shared entanglement for communication. It is not possible to communicate only using shared entanglement between the sender and the receiver. However, shared entanglement can offer important advantages for communication if we already have a quantum channel~\cite{BBPSSW96,BW92,BSST99} or even a classical channel~\cite{CLMW10,PLMKR11}. The latter is the setting we consider here. A quantum protocol is described by a Hilbert space $\cH$ (of arbitrary dimension), a unit vector (called state) $\ket{\psi} \in \cH \otimes \cH$ shared between the encoder and the decoder, and positive operator-valued measures on $\cH$ for the encoder $\{E(x|i)\}_{x}$ for each $i \in [2^k]$ and for the decoder $\{D(i|y)\}_{i}$ for each $y \in Y$. In the quantum setting, optimizing the success probability for transmitting $k$ bits is given by
\begin{align}\label{eq:channel_quantum}
\begin{aligned}
& \underset{(\cH,\ket{\psi},E,D)}{\text{maximize}}
& & \frac{1}{2^k} \sum_{x,y,i} W_{X \to Y}(y|x) \bra{\psi} E(x|i) \otimes D(i|y) \ket{\psi} \\
& \text{subject to}
& &\sum_{x} E(x|i) = \id_{\cH} \quad \forall i \in [2^k] \\
& 
& &\sum_{i} D(i|y) = \id_{\cH} \quad \forall y \in Y \\
& 
& & 0 \preceq E(x|i) \preceq \id_{\cH} \quad \forall (x,i) \in X \times [2^k] \\ 
&
& & 0 \preceq D(i|y) \preceq \id_{\cH} \quad \forall (i,y) \in [2^k] \times Y\,.
\end{aligned}
\end{align}
Here, $\bra{\psi}$ is the conjugate transpose of the vector $\ket{\psi}$ and we write $D \preceq E$ if the operator $E-D$ is positive semidefinite.
As we can always take $\cH = \mathbb{C}$, any feasible solution for \eqref{eq:channel_classical} is also a feasible solution for \eqref{eq:channel_quantum}.

Allowing for quantum protocols also leads to the definition of quantum graph parameters~\cite{CMNSW07,RM12,LP13,BLP15}. For example, the stability number of a graph $G$ can be viewed in terms of the success probability of a two-prover game depending on $G$, or in terms of the success probability for information transmission over a noisy channel defined by $G$. Allowing quantum protocols in these tasks naturally leads to the definition of quantum stability numbers of a graph. To study such quantum graph parameters, Laurent and Piovesan~\cite{LP13} recently introduced a non-commutative analog of the completely positive cone $\mathcal{CP}$ called the completely positive semidefinite cone $\mathcal{CS}_+$. For the aforementioned problems, the set of quantum strategies can then be described using $\mathcal{CS}_+$, and the quantum advantage is witnessed by the fact $\mathcal{CS}_+$ is larger than $\mathcal{CP}$.

Having phrased the setup, let us now give a short overview of our findings.



\subsection{Results}\label{sec:objectives}

We start by phrasing problems like the ones stated above as optimization programs. More precisely, we study the class of tasks that can be described by optimizing a bilinear function subject to linear inequalities. The optimization over classical protocols corresponds to a program similar to \eqref{eq:channel_classical} with commutative (scalar) variables, whereas the optimization over quantum protocols corresponds to allowing the variables to be operator-valued as in~\eqref{eq:channel_quantum}. As it appears from the expression, optimization over quantum protocol seems quite complicated. In fact, as there is no bound on the dimension of the Hilbert space, it is not known whether the optimal value is even computable. In the context of games, Navascu{\'e}s, Pironio and Ac{\'\i}n (NPA)~\cite{Navascues07}  introduced a family of semidefinite programming (SDP) relaxations that give efficiently computable upper bounds on quantum bilinear programs. This hierarchy was shown to asymptotically converge to the optimal quantum protocol~\cite{NPA08,PNA10,Doherty:2008tm}. These hierarchies can be seen as non-commutative versions of the sum-of-squares hierarchies introduced by Lasserre and Parrilo~\cite{Las01,Par03}.

Our first contribution is the observation that many information processing tasks can be formulated in this way. We believe that phrasing these seemingly different problems in a unified language will help in our understanding of each one of these problems. Moreover, we think that tools developed in the context of optimization should be valuable in characterizing the power and limitations of quantum protocols. Our second contribution is to give a new hierarchy of SDPs that gives upper bounds on quantum bilinear programs. Compared to the previous contributions~\cite{Navascues07,NPA08,PNA10,Doherty:2008tm}, our hierarchy has some additional constraints which we illustrate to be useful in several settings. For example, the first level of our hierarchy has the nice property of being naturally bounded by the maximal value of the general problem, i.e., it is bounded by one for the case of channel coding discussed above. In addition, by means of a specific example, we show that our SDPs can give better bounds in practice. The new constraints are also important to study the completely positive semidefinite cone $\mathcal{CS}_+$, which consists of all the symmetric matrices that admit a Gram representation by positive semidefinite matrices of any size. In fact, we show that these constraints lead to a natural hierarchy of SDP outer approximations for the completely positive semidefinite cone $\mathcal{CS}_+$.




\subsection{Organization of the Paper}\label{sec:organization}

In Section~\ref{sec:quantum}, we introduce the general setup of quantum bilinear optimization and present our new hierarchy of SDPs. We keep the main text elementary and only prove that our SDPs give upper bounds on the quantum programs when the Hilbert space is finite-dimensional (the infinite-dimensional case as well as the convergence of the hierarchy are deferred to appendices). In Section~\ref{sec:examples}, we describe applications to two-prover games, channel coding, randomness extractors as well as to the optimization over the completely positive semidefinite cone. 


\section{Bilinear Optimization}\label{sec:quantum}

\subsection{Setup}

As motivated in~\eqref{eq:channel_classical} we would like to start from the following type of (classical) bilinear optimization program with real variables $z_{\alpha}$ for $\alpha\in[N]\eqdef\{1,\ldots,N\}$ and $y_{\beta}$ for $\beta\in[M]\eqdef\{1,\ldots,M\}$,\footnote{Here and henceforth we write maximize for taking the supremum (in particular the maximum might not be attained).\label{ft:maximize}}
\begin{align}\label{eq:program_classical}
\begin{aligned}
p[A,\cG,\cK]\eqdef\;&\underset{(z_\alpha,y_\beta)}{\text{maximize}}
& & \sum_{\alpha,\beta} A_{\alpha,\beta} z_\alpha y_\beta  \\
& \text{subject to}
& &g(z_1,\dots,z_N) \geq 0 \quad \forall g\in\cG\\
& & &k(y_1,\dots,y_M) \geq 0 \quad \forall k\in\cK\,.
\end{aligned}
\end{align}
with sets of affine constraints $\cG\eqdef\{g(z_1,\ldots,z_N)\}$ and $\cK\eqdef\{k(y_1,\ldots,y_M)\}$, where
\begin{align}
g(z_1,\ldots,z_N)\eqdef g^0+\sum_{\alpha\in[N]}g^\alpha z_\alpha\quad\mathrm{and}\quad k(y_1,\ldots,y_M)\eqdef k^0+\sum_{\beta\in[M]}k^\beta y_\beta\,.
\end{align}
For convenience we also define the complete set of constraints
\begin{align}\label{eq:set_constraints}
\cF\eqdef\cG\cup\cK\cup\{1\}
\end{align}
where $1$ is the function always equal to $1$. Moreover, call
\begin{align}
p[A,\cF]\eqdef p[A,\cG,\cK]
\end{align}
the classical value of~\eqref{eq:program_classical}. We restrict ourselves to affine constraints as all our applications have this form. It is however possible to extend the approach to polynomial equality constraints and have a linear term in the objective function, see Appendix~\ref{sec:generalization}.

In analogy to~\eqref{eq:channel_quantum} the corresponding quantum bilinear optimization program of~\eqref{eq:program_classical} is then as follows. Let $\cH$ be a Hilbert space (of arbitrary dimension), $\ket{\psi}\in\cH$ with $\|\ket{\psi}\|=1$, and let $E_{\alpha},D_{\beta}$ be Hermitian operators in the algebra $\cB(\cH)$ of bounded linear operators on $\cH$. By substituting the variables $z_\alpha$ in the linear constraints with operators $E_\alpha$ (and similarly for $y_\beta$ with $D_{\beta}$) we set
\begin{align}\label{eq:program_quantum}
\begin{aligned}
p^*[A,\cG,\cK]\eqdef\;&\underset{(\cH,\ket{\psi},E_\alpha,D_\beta)}{\text{maximize}}
& & \sum_{\alpha,\beta} A_{\alpha,\beta} \bra{\psi}E_{\alpha}D_{\beta}\ket{\psi}  \\
& \text{subject to}
& &[E_{\alpha},D_{\beta}]=0\quad \forall (\alpha, \beta) \in [N]\times [M]\\
& & &g(E_1,\dots,E_N) \succeq 0\quad\forall g\in\cG\\
& & &k(D_1,\dots,D_M) \succeq 0\quad \forall k\in\cK\,,
\end{aligned}
\end{align}
where $[E_{\alpha},D_{\beta}]\eqdef E_{\alpha}D_{\beta}-D_{\beta}E_{\alpha}$ denotes the commutator, and $g(E_1,\dots,E_N) \succeq 0$ means that the operator $g(E_1,\dots,E_N)$ is positive semidefinite (and similarly for $k(D_1,\dots,D_M)\succeq0$). We note that we do not think of the commutation conditions $[E_{\alpha},D_{\beta}]=0\quad \forall (\alpha, \beta) \in [N]\times [M]$ as being constraints, but rather being part of the ``quantization procedure'' itself. This is motivated by our examples originating from information theory, and the commutation relations naturally lead to their quantum versions. Moreover, from now on we assume that the sets of constraints $\cG$, $\cF$ satisfy the following.

\begin{assumption}\label{assmp:boundedop}
The set of constraints $\cG$, $\cF$ imply that there exists a positive constant $C>0$ such that the relations $-C\idty \preceq E_\alpha \preceq C \idty$ and $-C\idty \preceq D_\beta \preceq C \idty$ hold for all $(\alpha, \beta) \in [N]\times [M]$. Moreover, all operators denoted by $E_\alpha$ and $D_\beta$ are assumed to be self-adjoint.
\end{assumption}

We note that the Assumption above implies that the operator valued variables are always bounded operators, as the relations above together with the assumption of self-adjointness imply $\norm{E_\alpha},\norm{D_\beta} \leq C$.

In the following we call
\begin{align}
p^*[A,\cF]\eqdef p^*[A,\cG,\cK]
\end{align}
the quantum value of~\eqref{eq:program_quantum}, with the total set of constraints $\cF$ as in~\eqref{eq:set_constraints}. Clearly the quantum value is never smaller than the classical value,
\begin{align}
p[A,\cF]\leq p^*[A,\cF]\,.
\end{align}
Note that compared to the entanglement-assisted channel coding example~\eqref{eq:channel_quantum} we do not assume that the Hilbert space $\cH$ has tensor product form with $E_{\alpha}$ acting on the first factor and $D_{\beta}$ acting on the second factor, but only that $E_{\alpha}$ and $D_{\beta}$ commute. This takes into account the most general formulation of quantum mechanics~\cite{Haag92} (see also~\cite{Berta11_4} for a quantum information theory reference). However, for every feasible solution of~\eqref{eq:program_quantum} corresponding to a finite-dimensional Hilbert space, we can assume that the Hilbert space has a tensor product structure $\cH\otimes\cH$ with operators $E_{\alpha}\otimes\1$ and $\1\otimes D_{\beta}$ (instead of just $[E_\alpha,D_\beta]=0$ on a single space $\cH$); see e.g.,~\cite[Chapter 5]{Takesaki01} or for a self-contained quantum information theory reference~\cite{Scholz08}. Moreover, for the general infinite-dimensional case the optimal value of~\eqref{eq:channel_quantum} is certainly upper bounded by the optimal value of the corresponding program~\eqref{eq:program_quantum}.

\begin{remark}
Provided Connes' embedding conjecture has a positive answer~\cite{Connes:1976ta}, we can restrict the optimization in~\eqref{eq:program_quantum} to finite-dimensional Hilbert spaces (and thus of tensor product form). This was proved for the special case of bipartite games in~\cite{Scholz11,Fri12,Ozawa:2013hn}. For a proof sketch for the general case see Appendix~\ref{app:connes}.
\end{remark}

Our ultimate goal is to understand the gap between the classical value $p[A,\cF]$ and the quantum value $p^*[A,\cF]$ for operational examples of interest. For the problems that we study in this paper $p[A,\cF]$ is typically understood but estimating $p^*[A,\cF]$ is the challenge. Lower bounds on $p^*[A,\cF]$ can then be found by any feasible solution of~\eqref{eq:program_quantum} but upper bounds are harder to find (basically because the optimization in~\eqref{eq:program_quantum} is over Hilbert spaces of unbounded dimension). Building on the works of Navascu{\'e}s, Pironio and Ac{\'\i}n~\cite{Navascues07,NPA08} and Doherty, Liang, Toner and Wehner~\cite{Doherty:2008tm} in the context of games, Pironio, Navascu{\'e}s and Ac{\'\i}n~\cite{PNA10} gave asymptotically converging hierarchies of SDP relaxations for general quantum polynomial optimization (see~\cite{FNT14} for an operator algebra point of view on this hierarchy). We briefly sketch their results when applied to our more specific setting of quantum bilinear optimization as in~\eqref{eq:program_quantum}.


\subsection{Generating upper bounds}

This section mainly serves motivational purposes. As our goal is to derive semidefinite program relaxations of~\eqref{eq:program_quantum}, we first outline a simplified analysis which will lead to upper bounds. These are then identified to be equal to the levels in the hierarchy of Navascu{\'e}s, Pironio and Ac{\'\i}n. The precise connection is briefly explained in the next section. We do not provide proofs, and defer the reader to the original papers~\cite{Navascues07,NPA08} for more details. 


We first introduce some notation. Let $\Sigma_\infty$ denote the free complex *-algebra generated by the $N+M$ symbols
\begin{align}
z_1, \dots, z_N, y_1, \dots, y_M\,.
\end{align}
In other words, these are the non-commutative polynomials in the variables $z,y$. The monomials of $\Sigma_\infty$ are also called words and can be indexed by a $u=(u_1,\dots,u_\ell)$ with $u_{i}\in\{1,\dots,N+M\}$. For example, the monomial $x_u$ indexed by $u= (1,3,3,N+2)$ is defined as $x_u = z_1 z_3^2 y_2$. The degree of a monomial $x_u$, which is also called the length of the word is denoted $\ell(u)$. The unit monomial $x_{\emptyset}$ is called the empty word indexed by $\emptyset$, and has length zero. Words $x_u,x_v$ are concatenated as
\begin{align}
x_u\circ x_v\eqdef x_{u\circ v}\quad\mathrm{with}\quad u\circ v\eqdef(u_1,\dots,u_{\ell(u)},v_1,\dots,v_{\ell(v)})\,.
\end{align}
The algebra $\Sigma_\infty$ also caries a natural involution $*:\Sigma_\infty\to\Sigma_\infty$ reversing the order of words with
\begin{align}
x_{u}^* \eqdef x_{u^*} \quad\mathrm{with}\quad u^*\eqdef(u_{\ell(u)}, \dots, u_{1}) \,,
\end{align}
and being the complex conjugation for complex scalars. For a fixed integer \(n \in \Nl\), the set of words (monomials) of length up to \(n\), \(\ell(w) \leq n\), spans a vector space $\Sigma_n$ of dimension
\begin{align}\label{eq:d_n}
d(n)\eqdef\frac{(N+M)^{n+1}-1}{N+M-1}\,.
\end{align}
Now for every feasible solution $(\cH,\psi,E_\alpha,D_\beta)$ of~\eqref{eq:program_quantum}, we define the linear form
\begin{align}\label{eq:linear_form}
\tilde{\omega}: \Sigma_\infty\to\mathbb{C}\quad\mathrm{with}\quad\tilde{\omega}(u)\eqdef\bra{\psi}X_{u}\ket{\psi}\,,
\end{align}
where $X_u$ stands for the explicit representation of the word $x_u$ in terms of the operators $E_\alpha$ and $D_\beta$ for the symbols $z_\alpha$ and $y_\beta$, respectively. Next, we choose $n\in \Nl$ and consider the $d(n) \times d(n)$ matrix labeled by words $u,v$ of length $n$
\begin{align}\label{eq:npa_matrix}
\tilde{\Omega}\eqdef\sum_{u,v \in \Sigma_n}\tilde{\Omega}_{u,v}\ket{u}\bra{v}\quad\text{with entries}\quad\tilde{\Omega}_{u,v}\eqdef\bra{\psi}X_{u^*}X_v\ket{\psi}\,.
\end{align}
Here $\ket{u}\bra{v}$ refers to the matrix with all zero entries except for the entry labeled $(u,v)$ which is equal to $1$.
This matrix is positive semidefinite since it is the Gram matrix of the vectors $X_v\ket{\psi}$. Moreover, the linear constraints $f\in\cF$ generate $d(n-1) \times d(n-1)$ matrices
\begin{align}\label{eq:npa_constraingmatrix}
\tilde{\Omega}[f]\eqdef\sum_{i=0}^{N+M}f^i\sum_{u,v \in \Sigma_{n-1}}\tilde{\Omega}_{u,(i)\circ v}\ket{u}\bra{v}
\end{align}
that are positive semidefinite as well (where $(i)$ indexes words of length one: the $i$-th symbol). For the commutativity constraints between $E_{\alpha}$ and $D_{\beta}$, this can be simply captured by identifying words $u \sim v$ if $v$ can be obtained from $u$ by using commutation between $z_{\alpha}$ and $y_{\beta}$. For example, $z_{1} y_3 z_2^2 \sim z_1 z_2^2 y_3$.
Restricting in~\eqref{eq:npa_matrix} and~\eqref{eq:npa_constraingmatrix} to constraints that only involve words up to length $n$ defines a hierarchy of semi-definite program relaxations. In more detail, for any $n \geq 1$
\begin{align}\label{eq:sdp_npa}
\begin{aligned}
\tilde{\sdp}_n[A,\cF]\eqdef\;&\underset{\tilde{\Omega}^n}{\text{maximize}}
& & \sum_{\alpha,\beta} A_{\alpha,\beta} \tilde{\Omega}^n_{(\alpha), (\beta)}  \\
& \text{subject to}
& &\tilde{\Omega}^n \in \mathrm{Pos}(d(n))\\
& & &\tilde{\Omega}^n_{\emptyset, \emptyset} = 1\\
& & &\tilde{\Omega}^n_{u, v^* \circ w} = \tilde{\Omega}^n_{v \circ u, w} \quad \forall u,v,w \in \Sigma_{n} : u \circ v \in \Sigma_{n}, v \circ w \in \Sigma_n\\
& & &\tilde{\Omega}^n_{u, v} = \tilde{\Omega}^n_{u', v'} \quad \forall u,u',v,v' \in \Sigma_{n} : u \sim u', v \sim v'\\
& & &\tilde{\Omega}^n[f]\eqdef\sum_{i=0}^{N+M} f^{i} \sum_{u,v \in \Sigma_{n-1}} \tilde{\Omega}^n_{u, (i) \circ v} \ket{u} \bra{v}\in \mathrm{Pos}(d(n-1))\quad\forall f\in\cF\,,
\end{aligned}
\end{align}
where $\mathrm{Pos}(d(n))$ denotes the set of positive semidefinite matrices of size $d(n)$ as in~\eqref{eq:d_n}, and we have the total set of constraints $\cF$ as in~\eqref{eq:set_constraints}. It now turns out by comparison to~\cite{Navascues07,NPA08} that the programs~\eqref{eq:sdp_npa} match exactly the semidefinite relaxations derived by Navascu{\'e}s, Pironia and Ac{\'\i}n. 

\subsection{NPA hierarchy}

In the optimization literature the matrices \(\tilde{\Omega}^n\) appearing in the program~\eqref{eq:sdp_npa} are called moment matrices while the matrices \(\tilde{\Omega}^n[f]\) are called localizing matrices. However, the program~\eqref{eq:sdp_npa} is not derived as presented above, but by introducing dual variables of the optimization problem~\eqref{eq:program_quantum}, which then can be identified with the matrices \(\tilde{\Omega}^n\) and \(\tilde{\Omega}^n[f]\). In case the moment matrix of the optimal solution is of the form~\eqref{eq:npa_matrix}, then the optimal solution equals the value $p^*[A,\cF]$. 

Clearly the levels of the NPA hierarchy are monotonically decreasing in the sense that for any $n\in\mathbb{N}$,
\begin{align}
\tilde{\sdp}_n[A,\cF]\geq \tilde{\sdp}_{n+1}[A,\cF]\,,
\end{align}
and by the preceding discussion we also have
\begin{align}
p^*[A,\cF]\leq \tilde{\sdp}_n[A,\cF]\,.
\end{align}
The first major contribution of~\cite{Doherty:2008tm,NPA08,PNA10} was a proof that the above sequence also converges to the value of $p^*[A,\cF]$,
\begin{align}\label{eq:convergence_NPA}
p^*[A,\cF]=\lim_{n\to\infty}\tilde{\sdp}_n[A,\cF]\,.
\end{align}
under the Assumption~\ref{assmp:boundedop}. This is achieved by showing that the quadratic module can be assumed to Archimedian and an explicit construction of the Hilbert space and associated operators.\footnote{For a given set of constraints $\cF$, the quadratic module is the set of polynomials $\cP(\Sigma_\infty)$ with variables in $\Sigma_\infty$ which are of the form $\sum_i a_i^* a_i + \sum_{ij} b_{ij}^* f_i b_{ij}$ for $a_i, b_{ij} \in \cP(\Sigma_\infty)$. It is called Archimedian, if there exists a constant $C>0$ such that the polynomial $C^2 - \sum_{i=1}^l u_i^2$ is an element. Note that we again assumed that the free variables $u_1,\ldots, u_l$ are hermitian.}


The first few levels of the NPA hierarchy have been used intensively in order to understand the separation between the classical and the quantum value of two-prover games, see e.g.,~\cite{PV09}. In the following we propose an alternative SDP hierarchy. This hierarchy is not only useful for studying two-prover games but also for other problems like (entanglement-assisted) one-shot channel coding, (quantum-proof) randomness extractors, and for optimizations over the completely positive semidefinite cone.


\subsection{New Hierarchy}\label{sec:new_hierarchy}

We use a way different from~\eqref{eq:npa_constraingmatrix} for generating constraints. Instead of defining the NPA linear form $\tilde{\omega}$ as in~\eqref{eq:linear_form} we define a bilinear form $\omega:\Sigma_\infty\times\Sigma_\infty\to\mathbb{C}$ that we now describe for the case of finite-dimensional Hilbert spaces. The general case can be found in Appendix~\ref{app:missing}. Now as stated above, for finite-dimensions we can assume that the non-commutative optimization in~\eqref{eq:program_quantum} is over tensor product Hilbert spaces $\cH\otimes\cH$ with operators $E_{\alpha}\otimes\1$ and $\1\otimes D_{\beta}$ (instead of just $[E_\alpha,D_\beta]=0$ on a single space $\cH$). We start with any feasible solution $(\cH\otimes\cH,\psi,E_\alpha\otimes\1,\1\otimes D_\beta)$ where again the operators $E_\alpha$ are explicit representations of the symbols $z_\alpha$ and the operators $D_\beta$ are explicit representations of the symbols $y_\beta$. Taking the partial trace over the second space $\cH$, we denote
\begin{align}
\sigma\eqdef\tr_{\cH}\left[\proj{\psi}\right]\eqdef\sum_i\big(\idty\otimes\bra{i}\big)\proj{\psi}\big(\idty\otimes\ket{i}\big)\quad\text{and write}\quad\ket{\psi} = \left(U \otimes \sigma^{1/2}\right) \ket{\Phi}\,,
\end{align}
where $\ket{\Phi}\eqdef\sum_{i} \ket{i} \ket {i}$ for some orthonormal basis $\{\ket{i}\}$ of $\cH$ and a unitary $U$.
The objective function of the quantum bilinear optimization program~\eqref{eq:program_quantum} can then be rewritten as
\begin{align}\label{eq:introsigma}
\sum_{\alpha,\beta}A_{\alpha,\beta}\bra{\psi}E_\alpha\otimes D_\beta\ket{\psi}
&=\sum_{\alpha,\beta}A_{\alpha,\beta}\bra{\Phi}U E_\alpha U^{\dagger} \otimes(\sigma^{1/2}D_\beta\sigma^{1/2})\ket{\Phi}\\
&=\sum_{\alpha,\beta}A_{\alpha,\beta}\tr\left[\bar{U} E_\alpha^T U^{T} \sigma^{1/2}D_\beta\sigma^{1/2}\right]\,,
\end{align}
where $E^T$ denotes the transpose of the operator $E$ and $\bar{U}$ is the complex conjugate of $U$ in the basis $\{\ket{i}\}$ of $\cH$. We note that the transpose as well as the conjugation by unitary operators preserve our constraints, and hence may be just absorbed in the operators \(E_\alpha\), as we maximize over them. Hence, we get the following alternative form of~\eqref{eq:program_quantum},
\begin{align}\label{eq:alternative_program}
\begin{aligned}
p^*[A,\cG,\cK]=\;&\underset{(\cH,\sigma,E_\alpha,D_\beta)}{\text{maximize}}
& & \sum_{\alpha,\beta} A_{\alpha,\beta} \tr\left[E_\alpha\sigma^{1/2}D_\beta\sigma^{1/2}\right]  \\
& \text{subject to}
& &\sigma\succeq0,\;\tr[\sigma]=1\\
& & &g(E_1,\dots,E_N) \succeq 0\quad\forall g\in\cG\\
& & &k(D_1,\dots,D_M) \succeq 0\quad \forall k\in\cK\,,
\end{aligned}
\end{align}
under the assumption that $\cH$ is finite-dimensional (see Appendix~\ref{app:missing} for the general case). Now, for fixed $\sigma$ we define the bilinear form
\begin{align}\label{eq:bilinear_form}
\omega:\Sigma_\infty\times\Sigma_\infty\to\mathbb{C}\quad\mathrm{with}\quad\omega(u,v)\eqdef\tr\left[X_{u}\sigma^{1/2}X_v\sigma^{1/2}\right]\,.
\end{align}
Similarly as for NPA we look at the (infinite-dimensional) matrix
\begin{align}\label{eq:new_matrix}
\Omega\eqdef\sum_{u,v}\Omega_{u,v}\ket{u}\bra{v}\quad\text{with entries}\quad\Omega_{u,v}\eqdef\omega(u^{*},v)=\tr\left[X_{u^*}\sigma^{1/2}X_{v}\sigma^{1/2}\right]=\bra{\psi}X^T_{u^*}\otimes X_{v}\ket{\psi}
\end{align}
and find that it is positive semidefinite. However, the bilinear form~\eqref{eq:bilinear_form} gives us even more structure. Namely we can say that the reordered (infinite-dimensional) matrix
\begin{align}\label{eq:11_condition}
\Omega[1,1]\eqdef\sum_{s,t,u,v}\Omega_{s^*\circ t,u^* \circ v}\ket{s}\bra{t}\otimes\ket{u}\bra{v}
\end{align}
is positive semidefinite as well. To see this, take a vector $\ket{\phi} = \sum_{s,u} c_{s,u} \ket{s} \ket{u}$. Then, we have
\begin{align}
\bra{\phi} \Omega[1,1] \ket{\phi}&= \sum_{s,t,u,v} \bar{c}_{s,u} c_{t,v} \tr\left[X_{t^*} X_s \sigma^{1/2} X_{u^*} X_{v} \sigma^{1/2} \right]\notag\\
&= \sum_{s,t,u,v} \bar{c}_{s,u} c_{t,v} \tr\left[X_{s} \sigma^{1/2} X_{u^*} X_{v} \sigma^{1/2} X_{t^{*}}\right]\label{eq:semidefinite_1}\\
&= \tr\left[\left(\sum_{s,u} \bar{c}_{s,u} X_{s} \sigma^{1/2} X_{u^*} \right) \left(\sum_{t,v} c_{t,v} X_{v} \sigma^{1/2} X_{t^*} \right) \right]\label{eq:semidefinite_2}\\
&= \tr\left[  \left(\sum_{s,u} c_{s,u} X_{u} \sigma^{1/2} X_{s^*} \right)^* \left(\sum_{s,u} c_{s,u} X_{u} \sigma^{1/2} X_{s^*} \right) \right]\geq 0\,.\label{eq:semidefinite_3}
\end{align}
More generally, any pair of linear constraints $f,\hat{f}\in\cF$ from~\eqref{eq:set_constraints} generate (infinite-dimensional) matrices
\begin{align}\label{eq:new_matrix_constraint}
\Omega[f,\hat{f}]\eqdef\sum_{i,j=0}^{N+M} f^i \hat{f}^j \sum_{r,s,u,v} \Omega_{r^* \circ (i) \circ s, u^* \circ (j) \circ v} \ket{r} \bra{s} \otimes \ket{u} \bra{v}
\end{align}
that are positive semidefinite by the same argument as in~\eqref{eq:semidefinite_1}--\eqref{eq:semidefinite_3}. Now, restricting in~\eqref{eq:new_matrix} and~\eqref{eq:new_matrix_constraint} to constraints that only involve words up to length $n$ defines the $n$-th level of our new hierarchy. The variable we optimize over is now a matrix $\Omega^n$ whose rows and columns are indexed by words of length at most $n$. That is, for $n$ odd we define
\begin{align}\label{eq:sdp_odd}
\begin{aligned}
\sdp_n[A,\cF]\eqdef\;&\underset{\Omega^n}{\text{maximize}}
& & \sum_{\alpha,\beta} A_{\alpha,\beta} \Omega^n_{(\alpha),(\beta)}  \\
& \text{subject to}
& &\Omega^n \in \mathrm{Pos}(d(n))\\
& & &\Omega^n_{\emptyset, \emptyset} = 1\\
& & &\Omega^n[f,\hat{f}]\eqdef\sum_{i,j=0}^{N+M} f^i \hat{f}^j \sum_{r,s,u,v \in \Sigma_{(n-1)/2}} \Omega^n_{r^* \circ (i) \circ s, u^* \circ (j) \circ v} \ket{r} \bra{s} \otimes \ket{u} \bra{v}\\
& & &\qquad\qquad\qquad\qquad\qquad\qquad\qquad\qquad\quad\in\mathrm{Pos}(d^2(n-1)) \quad \forall f, \hat{f}\in \cF\,.
\end{aligned}
\end{align}
Note that the third constraints of the form $\Omega^n[1,f]\succeq0$ correspond to constraints $\tilde{\Omega}^n[f]\succeq0$ in the NPA hierarchy as in~\eqref{eq:sdp_npa}. For $n \geq 2$ even, we replace the last constraint in~\eqref{eq:sdp_odd} with the following constraints where $n'\eqdef(n-2)/2$:
\begin{align}\label{eq:sdp_even}
\begin{aligned}
\Omega^n[1,1]&\eqdef \sum_{\substack{r,s \in \Sigma_{n/2} \\ u,v \in \Sigma_{n/2}}} \Omega^n_{r^* \circ s, u^* \circ v} \ket{r} \bra{s} \otimes \ket{u} \bra{v}\in \mathrm{Pos}\Big(d\big(n/2\big)d\big(n/2\big)\Big) \\
\Omega^n[f]&\eqdef\sum_{i=0}^{N+M} f^i \sum_{\substack{r,s \in \Sigma_{n'} \\ u,v \in \Sigma_{n/2}}} \Omega^n_{r^* \circ (i) \circ s, u^* \circ v} \ket{r} \bra{s} \otimes \ket{u} \bra{v}\in \mathrm{Pos}\Big(d\big(n'\big)d\big(n/2\big)\Big) \quad \forall f \in \cF \\
\Omega^n[f,\hat{f}]&\eqdef\sum_{i=0}^{N+M} f^i f^j \sum_{\substack{r,s \in \Sigma_{n'} \\ u,v \in \Sigma_{n'}}} \Omega^n_{r^* \circ (i) \circ s, u^* \circ (j) \circ v} \ket{r} \bra{s} \otimes \ket{u} \bra{v}\in \mathrm{Pos}\Big(d\big(n'\big)d\big(n'\big)\Big) \quad \forall f, \hat{f} \in \cF.
\end{aligned}
\end{align}

In accordance with the literature we call the matrices \(\Omega^n\) moment matrices and the matrices \(\Omega^n[f,\hat{f}]\), \(\Omega^n[f]\) localizing matrices. Clearly the levels of this new hierarchy are monotonically decreasing in the sense that for any $n\in\mathbb{N}$,
\begin{align}
\sdp_n[A,\cF]\geq\sdp_{n+1}[A,\cF]\,.
\end{align}
We note that the SDPs we derive correspond in the special case where $\ket{\psi}$ is restricted to be a maximally entangled state on $\cH \otimes \cH$, or equivalently $\sigma$ to be maximally mixed, to the SDP relaxations proposed in~\cite{LVN14}. Such relaxations were also used for verifying experimental findings~\cite{CLBGK15}.

The following theorem summarizes the relationship between $p^*[A, \cF]$ and the sequence of SDPs $\sdp[A,\cF]$.

\begin{theorem}\label{thm:hierarchy_general}
Using the notation in this section, we have for all $n \geq 1$,
\begin{align}\label{eq:upper_bound_sdp}
p^*[A,\cF]\leq\sdp_n[A,\cF]\,.
\end{align}
Moreover, under the Assumption~\ref{assmp:boundedop} we have 
\begin{align}\label{eq:asymptotic_convergence}
p^*[A,\cF]=\lim_{n\to\infty}\sdp_n[A,\cF]\,.
\end{align}
\end{theorem}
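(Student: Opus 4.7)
My plan is to prove the upper bound~\eqref{eq:upper_bound_sdp} first (for finite-dimensional feasible points, with the general case deferred to the appendix) and then the convergence~\eqref{eq:asymptotic_convergence} via a GNS-type reconstruction.

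For the upper bound, given any finite-dimensional feasible $(\cH,\ket{\psi},E_\alpha,D_\beta)$, I would pass to the equivalent form~\eqref{eq:alternative_program} and define the candidate moment matrix
\begin{align*}
\Omega^n_{u,v} \eqdef \tr\bigl[X_{u^*}\,\sigma^{1/2}\, X_v\,\sigma^{1/2}\bigr]
\end{align*}
indexed by words of length at most $n$, where $X_u$ is the substitution of the $E_\alpha$ and $D_\beta$ operators for the symbols in $u$. The normalization $\Omega^n_{\emptyset,\emptyset}=1$ is immediate from $\tr[\sigma]=1$, and the single-letter entries reproduce the objective of~\eqref{eq:alternative_program}. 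The moment matrix is positive semidefinite because $\Omega^n_{u,v} = \langle \xi_u,\xi_v\rangle$ in the Hilbert--Schmidt inner product with $\xi_u \eqdef \sigma^{1/4} X_u \sigma^{1/4}$, so it is a Gram matrix. For each localizing matrix $\Omega^n[f,\hat f]$, the constraints $f(E)\succeq 0$ and $\hat f(D)\succeq 0$ permit square-root factorizations $f(X)=F^2$, $\hat f(X)=G^2$, and expanding and cycling as in~\eqref{eq:semidefinite_1}--\eqref{eq:semidefinite_3} expresses $\bra{\phi}\Omega^n[f,\hat f]\ket{\phi}$ as $\tr[M^* M]$ for an explicit operator $M$ built from the coefficients of $\ket{\phi}$. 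The same template handles the even-$n$ constraints $\Omega^n[1,1]$ and $\Omega^n[f]$ in~\eqref{eq:sdp_even}, giving~\eqref{eq:upper_bound_sdp} in finite dimension.

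For the convergence, Assumption~\ref{assmp:boundedop} yields uniform entry bounds $|\Omega^n_{u,v}| \leq C^{\ell(u)+\ell(v)}$, so a diagonal extraction along a sequence of near-optimal SDP solutions produces a subsequential limit bilinear form $\omega_\infty:\Sigma_\infty\times\Sigma_\infty\to\C$ that satisfies all positivity constraints in the limit and attains the limiting SDP value on single-letter pairs. Putting $\langle P,Q\rangle \eqdef \omega_\infty(P^*,Q)$ on the free algebra, taking the quotient by its null space and completing yields a Hilbert space $\cH$ with distinguished unit vector $[\emptyset]$. The localizing matrices $\Omega^n[f]\succeq 0$ together with the Archimedean quadratic module guaranteed by Assumption~\ref{assmp:boundedop} make the left-regular representation extend to bounded self-adjoint operators satisfying $g(E)\succeq 0$ and $k(D)\succeq 0$.

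The main obstacle will be recovering the two commuting families of operators together with the state $\sigma$ playing the symmetric role around $\sigma^{1/2}$ in~\eqref{eq:alternative_program}; this is exactly where the new constraints $\Omega^n[f,\hat f]\succeq 0$ are essential, distinguishing this hierarchy from NPA. Concretely, I would use $\Omega[1,1]\succeq 0$ to produce a second commuting copy of the reconstructed algebra acting on the same Hilbert space alongside a positive density implementing $\sigma$, and then invoke the mixed positivity $\Omega[f,\hat f]\succeq 0$ to transfer $f$-positivity to the first copy and $\hat f$-positivity to the second. With commutation and both-sided positivity in place, the reconstructed tuple is feasible for~\eqref{eq:alternative_program} with objective equal to $\lim_n \sdp_n[A,\cF]$, which together with~\eqref{eq:upper_bound_sdp} yields~\eqref{eq:asymptotic_convergence}.
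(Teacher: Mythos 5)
Your argument for the upper bound \eqref{eq:upper_bound_sdp} in finite dimension is essentially the paper's: pass to the form \eqref{eq:alternative_program}, define $\Omega^n_{u,v}=\tr[X_{u^*}\sigma^{1/2}X_v\sigma^{1/2}]$, and verify each positivity constraint by factorizing and cycling the trace. Your Gram-matrix observation with $\xi_u=\sigma^{1/4}X_u\sigma^{1/4}$ in the Hilbert--Schmidt inner product is a clean way to see $\Omega^n\succeq 0$, and your treatment of the localizing matrices via square roots $f(E)=F^2$, $\hat f(D)=G^2$ is the same calculation as \eqref{eq:semidefinite_1}--\eqref{eq:semidefinite_3}. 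So that half is fine, modulo the infinite-dimensional case which you correctly defer.

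For the convergence \eqref{eq:asymptotic_convergence}, however, there is a genuine gap. You propose to GNS-complete the single sesquilinear form $\langle P,Q\rangle:=\omega_\infty(P^*,Q)$ on $\Sigma_\infty$ and then to ``produce a second commuting copy'' from $\Omega[1,1]\succeq 0$. This step is not viable as stated. The Hilbert space you get carries the left-regular representation of a single family $\pi(z_\alpha),\pi(y_\beta)$; there is no reason a second, \emph{commuting} family together with the density-like object implementing $\sigma$ should act on that same space, and $\Omega[1,1]\succeq 0$ by itself does not supply one. The paper's proof does something structurally different: it defines a \emph{state} $s(\bar a\otimes b)=\omega(a^*,b)$ on the maximal $C^*$-tensor product $\bar{\cA}\otimes\cA$ (where $\bar{\cA}$ is the opposite algebra), does GNS on that larger algebra so that both tensor factors act, and then invokes Woronowicz's purification theorem to obtain the anti-unitary involution $J$ with $J\cM J=\cM'$. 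The map $\varphi(a)=J\pi(j(\overline{a^*}\otimes\1))J$ then gives the second family in the commutant, and the mixed localizing constraints $\Omega[f,\hat f]\succeq 0$ are what make $s$ a state on the tensor product in the first place. Without passing to $\bar\cA\otimes\cA$ you have no mechanism for the commutant, so your plan as written would stall exactly at the point you flag as ``the main obstacle.'' You also omit the Positivstellensatz/duality step (Helton--McCullough), which the paper uses to identify the limiting value $\hat p$ with a positivity condition over all of $\Sigma_\infty\otimes\Sigma_\infty$; your diagonal-extraction argument could substitute for that step provided you prove uniform entry bounds inductively from the Archimedean module, but that is a subsidiary issue compared to the missing commutant construction.
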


\begin{proof}
The inequality~\eqref{eq:upper_bound_sdp} was proved above for finite-dimensional Hilbert spaces. For the general case see Appendix~\ref{app:alternative}. For~\eqref{eq:asymptotic_convergence}, a self-contained proof can be found in Appendix~\ref{app:convergence}. The convergence also follows from the convergence of the NPA hierarchy~\eqref{eq:convergence_NPA} together with Proposition~\ref{prop:comparison_hierarchies}.
\end{proof}

We now discuss the first level relaxation of our new hierarchy~\eqref{eq:sdp_odd} in more detail.


\subsection{First Level Relaxation}

For applications the first level relaxation often already gives good bounds. We find
\begin{align}\label{eq:generic-first-level}
\begin{aligned}
\sdp_1[A,\cF]=\;&\underset{\Omega^1}{\text{maximize}}
& & \sum_{\alpha,\beta} A_{\alpha,\beta} \Omega^1_{(\alpha), (\beta)}  \\
& \text{subject to}
& &\Omega^1 \in \mathrm{Pos}\left(1 + N + M\right)\\
& & &\Omega^1_{\emptyset, \emptyset} = 1\\
& & &\sum_{i,j=0}^{N+M} f^i \hat{f}^j\Omega^1_{(i),(j)}\geq 0 \quad \forall f, \hat{f}\in \cF\,.
\end{aligned}
\end{align}
Compared to this, the first level relaxation of the NPA hierarchy~\eqref{eq:sdp_npa} gives
\begin{align}\label{eq:npa-first-level}
\begin{aligned}
\tilde{\sdp}_1[A,\cF]=\;&\underset{\tilde{\Omega}^1}{\text{maximize}}
& & \sum_{\alpha,\beta} A_{\alpha,\beta} \tilde{\Omega}^1_{(\alpha), (\beta)}  \\
& \text{subject to}
& &\tilde{\Omega}^1 \in \mathrm{Pos}\left(1 + N + M\right)\\
& & &\tilde{\Omega}^1_{\emptyset, \emptyset} = 1\\
& & &\sum_{i=0}^{N+M}f^{i}\tilde{\Omega}^1_{(i),\emptyset}\geq0\quad\forall f\in\cF\,.
\end{aligned}
\end{align}
By inspection we find that~\eqref{eq:generic-first-level} has extra constraints compared to~\eqref{eq:npa-first-level}. This implies in particular that the first level of our hierarchy is never a worse approximation than the first level of the NPA hierarchy,
\begin{align}
\sdp_1[A,\cF]\leq\tilde{\sdp}_1[A,\cF]\,.
\end{align}
The extra conditions are of the form
\begin{align}
&\sum_{i,j=0}^{N+M} g^i \hat{g}^j\Omega^1_{(i),(j)}\geq 0 \quad \forall g, \hat{g}\in \cG\quad\sum_{i,j=0}^{N+M} k^i \hat{k}^j\Omega^1_{(i),(j)}\geq 0 \quad \forall k, \hat{k}\in \cK\label{eq:additional_plain}\\
&\sum_{i,j=0}^{N+M} g^i k^j\Omega^1_{(i),(j)}\geq 0 \quad \forall g\in \cG,\; \forall k\in \cK\,.\label{eq:mixed}
\end{align}
We note that in many settings the constraint~\eqref{eq:mixed} can be inferred from the second level of the NPA hierarchy and hence can be added to the first NPA level as needed when evaluating examples. The former conditions~\eqref{eq:additional_plain} however are qualitatively different from the NPA hierarchy. We will see later that for certain applications and examples the additional conditions~\eqref{eq:additional_plain} are useful (Section~\ref{sec:examples}). In the following section we compare the higher levels of the two hierarchies.

\subsection{Relations between Hierarchies}

Although a direct comparison of our new hierarchy with the NPA hierarchy is difficult (see the argument below) we can give the following connection. 

\begin{proposition}\label{prop:comparison_hierarchies}
\normalfont
As already seen in~\eqref{eq:generic-first-level} and~\eqref{eq:npa-first-level} we have
\begin{align}
\sdp_{1}[A,\cF] \leq \tilde{\sdp}_1[A,\cF]\,.
\end{align}
Moreover, for $n \geq 2$ we have
\begin{align}
\sdp_{2n}[A,\cF]\leq\tilde{\sdp}_n[A,\cF]\,.
\end{align}
\end{proposition}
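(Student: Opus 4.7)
The plan is, for $n \geq 2$, to construct from any feasible $\Omega^{2n}$ of the new hierarchy (level $2n$) a feasible $\tilde{\Omega}^n$ of the NPA hierarchy (level $n$) with the same objective value; this immediately yields $\sdp_{2n}[A,\cF] \leq \tilde{\sdp}_n[A,\cF]$. The case $n=1$ was already handled in the discussion preceding the Proposition: \eqref{eq:generic-first-level} differs from \eqref{eq:npa-first-level} only by the additional constraints \eqref{eq:additional_plain}--\eqref{eq:mixed}, so $\sdp_1[A,\cF] \leq \tilde{\sdp}_1[A,\cF]$ directly.

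For the general case, I would first invoke the NPA commutation-equivalence $u \sim u'$ to restrict to canonical words $u = u_z u_y$, where $u_z$ collects the $z$-symbols of $u$ in their original order and $u_y$ the $y$-symbols. I would then set
\begin{align*}
\tilde{\Omega}^n_{u,v} \eqdef \Omega^{2n}_{u_z^* v_z,\, u_y^* v_y},
\end{align*}
which is well-defined because $|u_z^* v_z|,\, |u_y^* v_y| \leq |u|+|v| \leq 2n$. Matching of objectives is immediate since for $u=z_\alpha,\,v=y_\beta$ one has $\tilde{\Omega}^n_{(\alpha),(\beta)} = \Omega^{2n}_{(\alpha),(\beta)}$. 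Normalization ($\tilde{\Omega}^n_{\emptyset,\emptyset} = 1$), the NPA commutation equivalence, and the shift relation $\tilde{\Omega}^n_{u,v^*w} = \tilde{\Omega}^n_{vu,w}$ all follow by direct computation using $(vu)_z = v_z u_z$ and $(v^*w)_z = v_z^* w_z$ (and analogously for the $y$-parts).

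The heart of the argument is positive semidefiniteness. For the moment matrix, for any coefficients $(c_u)$ indexed by canonical $u \in \Sigma_n$, I would compute
\begin{align*}
\sum_{u,v} \bar c_u c_v\, \tilde{\Omega}^n_{u,v} = \langle \chi | \Omega^{2n}[1,1] | \chi \rangle \geq 0,\qquad \ket{\chi} \eqdef \sum_u c_u \ket{u_z} \otimes \ket{u_y} \in \Sigma_n \otimes \Sigma_n,
\end{align*}
where the inequality is exactly the new-hierarchy constraint $\Omega^{2n}[1,1] \succeq 0$ (available since $2n$ is even). An analogous contraction expresses the NPA localizing matrices as
\begin{align*}
\tilde{\Omega}^n[f]_{u,v} = (\bra{u_z} \otimes \bra{u_y})\, \Omega^{2n}[f]\, (\ket{v_z} \otimes \ket{v_y})\qquad (f \in \cG \cup \{1\}),
\end{align*}
with the identical identity holding with $\Omega^{2n}[1,f]$ in place of $\Omega^{2n}[f]$ when $f \in \cK$ (so that the inserted symbol lies on the $y$-side of the canonical form). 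In each case the PSDness of $\tilde{\Omega}^n[f]$ follows from the corresponding new-hierarchy constraint by the same $\ket{\chi}$-vector argument.

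The main obstacle, in my view, is not the PSD derivations themselves---once the identification $u \leftrightarrow (u_z,u_y)$ is in place, they are essentially forced---but the length bookkeeping. The doubling $n \mapsto 2n$ is exactly what makes the sizes line up: a canonical word of length $\leq n$ decomposes into a pair $(u_z,u_y)$ whose components can each independently be of length up to $n$, so the pair naturally lives in $\Sigma_n \otimes \Sigma_n$, the domain of $\Omega^{2n}[1,1]$; similarly $\Omega^{2n}[f]$ lives on $\Sigma_{n-1} \otimes \Sigma_n$ and $\Omega^{2n}[1,f]$ on $\Sigma_{n-1} \otimes \Sigma_{n-1}$, exactly the sizes needed for $\tilde{\Omega}^n[f]$ on $\Sigma_{n-1} \otimes \Sigma_{n-1}$. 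Verifying these arithmetic constraints case-by-case for all $f \in \cF$ is the delicate part.
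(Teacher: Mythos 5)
Your proposal is correct and follows essentially the same route as the paper's proof: you use the identical definition $\tilde{\Omega}^n_{u,v} = \Omega^{2n}_{u_z^*\circ v_z,\,u_y^*\circ v_y}$ (the paper phrases it via $m_w := \Omega^{2n}_{w_z,w_y}$ and $\tilde{\Omega}^n_{u,v} = m_{u^*\circ v}$, which unwinds to the same expression), and you establish positive semidefiniteness by recognizing the NPA moment and localizing matrices as compressions of $\Omega^{2n}[1,1]$, $\Omega^{2n}[g]$, and $\Omega^{2n}[1,k]$ respectively. The only cosmetic difference is that the paper phrases the compression as extraction of a principal sub-matrix while you phrase it via explicit contraction against $\ket{\chi} = \sum_u c_u\ket{u_z}\otimes\ket{u_y}$, and you make the $f\in\cK$ case explicit (using $\Omega^{2n}[1,f]$) where the paper only says it is ``similar''; both of these are faithful renderings of the same argument.
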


\begin{proof}
Let $\Omega^{2n}$ be a feasible solution for $\sdp_{2n}[A,\cF]$ with the even level constraints as in~\eqref{eq:sdp_even}. For any $w \in \Sigma_{2n}$, let $w_z$ and $w_y$ be the subwords of $w$ containing only symbols of type $z$ and $y$ respectively. For example, if $w = z_1 y_1^2 y_2 z_3 y_1$, then $w_z = z_1 z_3$ and $w_y = y_1^2 y_2 y_1$.

We define for every $w \in \Sigma_{2n}$, the complex number $m_{w}:=\Omega^{2n}_{w_{z}, w_y}$ and let $\tilde{\Omega}^n_{u,v} = m_{u^* \circ v}$ for arbitrary words $u, v$ of length at most $n$. Because of this form, it is easily seen that $\tilde{\Omega}^n_{u, v^* \circ w} = \tilde{\Omega}^n_{v \circ u, w}$. Moreover, observe that if $w \sim w'$ then $w_{z} = w'_{z}$ as well as $w_{y} = w'_{y}$. It follows that $\tilde{\Omega}^n_{u,v} = \tilde{\Omega}^n_{u',v'}$ if $u \sim u'$ and $v \sim v'$. For the positivity constraint we write
\begin{align}
\tilde{\Omega}^n &=  \sum_{u,v \in \Sigma_{n}} \tilde{\Omega}^n_{u,v} \ket{u} \bra{v} =  \sum_{u,v \in \Sigma_{n}} \Omega^{2n}_{u^*_z \circ v_{z}, u^*_y \circ v_y} \ket{u} \bra{v} \,.
\end{align}
This matrix is a principal sub-matrix of the matrix 
\begin{align}
\sum_{s,t,u,v \in \Sigma_{n}} \Omega_{s^* \circ t, u^* \circ v} \ket{s} \bra{t} \otimes \ket{u} \bra{v} \,,
\end{align}
by only considering rows corresponding to $t$ and $s$ being words with only symbols of type $z$, and $u$ and $v$ being words with only symbols of type $y$, and also such that $\ell(s \circ u), \ell(t \circ v) \leq n$. As a result $\tilde{\Omega}^n \succeq 0$. 
For the constraints $g \in \cG$, we have
\begin{align}
\sum_{i} g^i \sum_{u,v \in \Sigma_{n-1}}  \tilde{\Omega}^n_{u, (i) \circ v} \ket{u} \bra{v} &= \sum_{i} g^i \sum_{u,v \in \Sigma_{n-1}} \Omega^{2n}_{u^*_z \circ (i) \circ v_{z}, u^*_y \circ v_y}  \ket{u} \bra{v} \,,
\end{align}
which again is a positive semidefinite matrix as it is a principal sub-matrix of $\Omega^{2n}[g]$. The positivity of $\tilde{\Omega}^n[k]$ for $k \in \cK$ is similar.
\end{proof}

This proposition implies in particular that the convergence of the new hierarchy $\sdp_n[A,\cF]$ already follows from the convergence of the NPA hierarchy $\tilde{\sdp}_{n}[A,\cF]$ (see Appendix~\ref{app:convergence} for a direct proof). We leave it as an open question if the comparison $\sdp_1[A,\cF]\leq\tilde{\sdp}_1[A,\cF]$ for the first level is special or if we might even have $\sdp_n[A,\cF]\leq\tilde{\sdp}_n[A,\cF]$ in general. We emphasize that it is unfair to directly compare the SDPs $\sdp_n[A,\cF]$ and $\tilde{\sdp}_n[A,\cF]$ as our program can have more variables. In fact, if we take into account the commutation relations in the NPA program~\eqref{eq:sdp_npa}, the variable $\tilde{\Omega}^n$ is effectively smaller than the matrix $\Omega^n$ for our new relaxation~\eqref{eq:sdp_odd}, and even more so for large $n$.


\section{Applications}\label{sec:examples}

\subsection{Two-Prover Games}\label{sec:games}

In a two-prover game, each player (or prover) gets asked a question by the referee: $q_1 \in Q_1$ for the first player and $q_2 \in Q_2$ for the second player. Each player is then asked to provide an answer $a_1 \in A_1$ and $a_2 \in A_2$. The referee, looking at the questions and answers $q_1,q_2,a_1,a_2$ decides whether the players win or lose the game according to a function $V : A_1 \times A_2 \times Q_1 \times Q_2 \to \{0,1\}$. The players may use any agreed upon protocol but they cannot communicate once they have received the questions. The fundamental quantity of interest given such a game is the largest probability of success that the players can achieve. The study of multi-prover games was introduced in~\cite{BGKL88} and has played a major role in theoretical computer science~\cite{BFL91}. It also provides a very nice interpretation for understanding non-local correlations that can be obtained by measuring an entangled state~\cite{CHTW04}. The value of a game defined by the verification predicate $V$ and a distribution $\pi$ is given by
\begin{align}\label{eq:game_classical}
\begin{aligned}
\game(V, \pi)\eqdef\;& \underset{(e,d)}{\text{maximize}}
& & \sum_{q_1,q_2} \pi(q_1,q_1) \sum_{a_1,a_2} V(a_1,a_2,q_1,q_2) e(a_1|q_1) d(a_2|q_2) \\
& \text{subject to}
& &\sum_{a_1} e(a_1|q_1) = 1 \quad \forall q_1 \in Q_1 \\
& 
& &\sum_{a_2} d(a_2|q_2) = 1 \quad \forall q_2 \in Q_2 \\
& 
& & 0 \leq e(a_1|q_1) \leq 1 \quad \forall (a_1,q_1) \in A_1 \times Q_1 \\ 
&
& & 0 \leq d(a_2|q_2) \leq 1 \quad \forall (a_2,q_2) \in A_2 \times Q_2\,.
\end{aligned}
\end{align}
In the notation of~\eqref{eq:program_classical}, we have $N = |Q_1||A_1|$, $M = |Q_2||A_2|$, $\alpha \in Q_1 \times A_1$ and $\beta \in Q_2 \times A_2$. The matrix specifying the objective function is given by 
\begin{align}
A_{(q_1,a_1), (q_2,a_2)} = \pi(q_1,q_2) V(a_1,a_2,q_1,q_2) \,.
\end{align}
The constraints functions $\cF$ are the positivity and normalization conditions.
When the players are allowed to share entanglement (of arbitrary dimension), then we define the entangled value of the game as
\begin{align}\label{eq:game_quantum}
\begin{aligned}
\game^*(V, \pi)\eqdef\;& \underset{(\cH,\psi,E,D)}{\text{maximize}}
& & \sum_{q_1,q_2} \pi(q_1,q_1) \sum_{a_1,a_2} V(a_1,a_2,q_1,q_2)\bra{\psi} E(a_1|q_1) D(a_2|q_2) \ket{\psi} \\
& \text{subject to}
& &[E(a_1|q_1), D(a_2|q_2)] = 0, \; \forall a_1,a_2,q_1,q_2 \in A_1 \times A_2 \times Q_1 \times Q_2 \\
&
& &\sum_{a_1} E(a_1|q_1) = \id_{\cH} \quad \forall q_1 \in Q_1 \\
& 
& &\sum_{a_2} D(a_2|q_2) = \id_{\cH} \quad \forall q_2 \in Q_2 \\
& 
& & 0 \preceq E(a_1|q_1) \preceq \id_{\cH} \quad \forall (a_1,q_1) \in A_1 \times Q_1 \\ 
&
& & 0 \preceq D(a_2|q_2) \preceq \id_{\cH} \quad \forall (a_2,q_2) \in A_2 \times Q_2\,.
\end{aligned}
\end{align}
Using the procedure described in Section \ref{sec:quantum}, we can define a sequence of SDPs $\omega^{\sdp_n}(V,\pi)$ that are upper bounds on $\omega^*(V, \pi)$. In particular, for $n=1$, the SDP reads
\begin{align}\label{eq:game_sdp}
\begin{aligned}
\omega^{\sdp_1}(V,\pi)\eqdef\;& \underset{\Omega^1}{\text{maximize}}
& & \sum_{q_1,q_2} \pi(q_1,q_1) \sum_{a_1,a_2} V(a_1,a_2,q_1,q_2) \Omega_{(q_1,a_1),(q_2,a_2)} \\
& \text{subject to}
& &\Omega^1 \in \textrm{Pos}(1+|Q_1||A_1| + |Q_2||A_2|) \\
&
& &\Omega^1_{\emptyset,\emptyset} = 1\\
&
& &\sum_{a_1} \Omega^1_{(q_1,a_1),u} = \Omega^1_{\emptyset, u} \quad \forall q_1 \in Q_1, u \in \Sigma_1 \\
& 
& &\sum_{a_2} \Omega^1_{(q_2,a_2),u} = \Omega^1_{\emptyset, u} \quad \forall q_2 \in Q_2, u \in \Sigma_1 \\
& 
& & \Omega^1_{u,v} \geq 0 \quad \forall u,v \in \Sigma_{1}\,.
\end{aligned}
\end{align}
We have that the boundedness condition from Assumption~\ref{assmp:boundedop} is fulfilled by the last two constraints in~\eqref{eq:game_quantum}. Compared to the first level of the NPA hierarchy, the additional constraint is the last one, namely that all the matrix entries are non-negative. Note that for the special case of two-prover games the NPA hierarchy would explicitly encode the fact that we can assume that the operators $E(a_1|q_1)$ and $D(a_2|q_2)$ define projective measurements~\cite{NPA08}. This is done by adding some relations in the algebra $\Sigma_{\infty}$: one would add the relation,\footnote{We could easily add this property as well, but we choose not to do it to simplify the exposition.}
\begin{align}
(q_i,a_i) \circ (q_i, a'_i) = \delta_{a_i=a'_i} (q_i, a_i)\quad\mathrm{for}\quad i \in \{1,2\}\,,
\end{align}
and this decreases the number of words to be considered. Using this property together with the second level of the NPA hierarchy, one could then add to the first level of NPA the constraint that the off-diagonal blocks of the matrix $\tilde{\Omega}^1$ only have non-negative elements:
\begin{align}
\tilde{\Omega}_{(q_1,a_1),(q_2,a_2)} \geq 0\quad\text{for all}\quad(q_1,a_1) \in Q_1 \times A_1\quad\text{and}\quad(q_2,a_2) \in Q_2 \times A_2\,.
\end{align}
The SDP with these non-negativity constraints for the off-diagonal blocks also appeared in the context of studying unique games in~\cite{KRT07} (see also~\cite{JP11} for a discussion of various SDP relaxations). The additional constraint in our SDP is that all the entries of the matrix $\Omega^1$ are required to be non-negative.\\
  
\textbf{Independent work:} Very recently and independently of our work, the preprint~\cite{SV15} appeared showing (among other things) that in the case of games, the first level of the NPA hierarchy can be strengthened by including the constraint that the matrix elements are non-negative. This strengthening corresponds to $\omega^{\sdp_1}$ as in~\eqref{eq:game_sdp}.


\subsection{Noisy Channel Coding}\label{sec:channel}

Let us recall the setup of channel coding from the introduction. We have a channel mapping an element from the set $X$ to an element of the set $Y$ according to probabilities given by $W_{X\to Y}(y|x)$. The objective is to determine the maximum success probability for transmitting $k$ bits of information using this channel. The classical version of the problem is described in~\eqref{eq:channel_classical}. In the notation of~\eqref{eq:program_classical}, we have $N = 2^k |X|$, $M = 2^k |Y|$, $\alpha \in [2^k] \times X$ and $\beta \in [2^k] \times Y$. The matrix specifying the objective function is given by 
\begin{align}
A_{(i,x), (j,y)} = \delta_{i=j} W_{X\to Y}(y|x) \,.
\end{align}
The constraints functions $\cF$ are the positivity and normalization conditions. Explicitly writing the first level SDP from~\eqref{eq:generic-first-level} with some easy simplifications, we get
\begin{align}\label{eq:channel_coding}
\begin{aligned}
\channel^{\sdp_1}(W,k)\eqdef\;&\underset{\Omega^1}{\text{maximize}}
& & \frac{1}{2^k} \sum_{x,y,i} W_{X \to Y}(y|x) \Omega^1_{(i,x), (i,y)}  \\
& \text{subject to}
& &\Omega^1 \in \mathrm{Pos}(1+k|X| + k|Y|)\\
& & &\Omega^1_{\emptyset,\emptyset} = 1\\
& & &\sum_{x} \Omega^1_{w, (i,x)} =  \Omega^1_{w, \emptyset} \quad \forall i \in \left[2^k\right], w \in \Sigma_{1}\\
& & &\sum_{i} \Omega^1_{w, (i,y)} = \Omega^1_{w, \emptyset} \quad \forall y \in Y, w \in \Sigma_{1} \\
& & &\Omega^1_{u, v} \geq 0 \quad \forall u,v \in \Sigma_{1}\,.
\end{aligned}
\end{align}
Again, the additional constraint compared to the NPA hierarchy is the last one, namely the fact that all the entries of $\Omega^1$ are non-negative. Using this condition, we see that we have the desirable property that for any valid channel $W$ and any $k$,
\begin{align}
\channel^{\sdp_1}(W,k) &\leq \frac{1}{2^k} \sum_{x,y,i}W_{X \to Y}(y|x)\Omega^1_{(i,x), \emptyset} =\frac{1}{2^k} \sum_{i,x}\Omega^1_{(i,x), \emptyset}=\frac{1}{2^k} \sum_{i,x}\bar{\Omega}^1_{\emptyset,(i,x)}=\frac{1}{2^k} \sum_{i} \bar{\Omega}^{1}_{\emptyset, \emptyset} = 1 \,,
\end{align}
where we have used that the matrix $\Omega^1$ is hermitian (which is implied by $\Omega^1\succeq0$). Now, as a concrete example for which the classical and the quantum success probabilities are different we mention the following setup from~\cite{PLMKR11}. The objective is to send $k=1$ bit over the noisy channel $Z_{X \to Y}(y | x)$ represented by the input-output matrix
\begin{align}\label{eq:channel_example}
\left(\begin{smallmatrix}
1/3 & 1/3 & 0 & 0\\
0 & 0 & 1/3 & 1/3\\
1/3 & 0 & 1/3 & 0\\
0 & 1/3 & 0 & 1/3\\
1/3 & 0 & 0 & 1/3\\
0 & 1/3 & 1/3 & 0
\end{smallmatrix}\right)\,.
\end{align}
It is shown in~\cite{PLMKR11} that for this channel the classical and quantum success probability as in~\eqref{eq:channel_classical} and~\eqref{eq:channel_quantum} respectively are separated as,
\begin{align}
\channel^*(Z, 1)\geq\frac{2+2^{-1/2}}{3}\approx0.902>0.833\approx\frac{5}{6}=\channel(Z, 1)\,.
\end{align}
Moreover, it was shown in~\cite{PhysRevA.87.062301,Williams11} that the above lower bound for $\channel^*(W, 1)$ is optimal as long as we restrict the optimization in~\eqref{eq:channel_quantum} to two dimensional Hilbert spaces.

Implementing our first level SDP relaxation~\eqref{eq:channel_coding} using CVX for MATLAB~\cite{cvx,gb08} gives the first non-trivial upper bound for the general optimization~\eqref{eq:channel_quantum} leading to,\footnote{The code is available at \url{http://www.omarfawzi.info}.}
\begin{align}
\channel^{\sdp_1}(Z,1) \approx 0.908\geq\channel^*(Z, 1)\geq0.902\,.
\end{align}
We note that the first level NPA relaxation as in~\eqref{eq:npa-first-level} only gives the trivial upper bound of one. This is the case even when adding the constraint that the off-diagonal elements of the matrix $\tilde{\Omega}^{1}_{(i,x),(j,y)}$ are non-negative.\footnote{Another upper bound, the so-called non-signaling success probability of the channel~\eqref{eq:channel_example}, is one as well (see~\cite{PLMKR11,PhysRevA.87.062301,Williams11} for details).} In Appendix~\ref{app:coding}, we show that the bound given by the $\channel^{\sdp_1}(Z,1)$ is in fact achievable with four dimensional entanglement-assistance:
\begin{align}
\channel^{*}(Z, 1) \geq \frac{1}{2} + \frac{1}{\sqrt{6}} \approx 0.908\,.
\end{align}\\

\textbf{Subsequent work:} After this work was posted, a limit on the maximum advantage that can be obtained by using entanglement-assistance was proved in~\cite{BF15}. More precisely, we have that for any channel $W$ and sending $k$ bits of information, 
\begin{align}
\channel(W,k) \geq (1-e^{-1})\channel^{*}(W,k)\,.
\end{align}


\subsection{Randomness Extractors}\label{sec:randomness}

A randomness extractor is defined by a set of functions
\begin{align}
\mathsf{Ext}:=\Big\{f_s:\left[2^n\right] \to \left[2^m\right]\Big\}_{s \in \left[2^d\right]}
\end{align}
mapping bit strings of length \(n\) to shorter ones of length \(m\); see~\cite{Vadhan11} for a survey. As the name suggests, the goal is to extract (almost) perfect randomness from a weaker source of randomness. That is, given some distribution over bit strings of length \(n\), by applying one of the functions chosen uniformly at random, we want to obtain a distribution close to the uniform one (in the total variation distance). The requirement is that the initial distribution contains enough randomness as measured using the min-entropy which is equal to minus the logarithm of the maximal entry of the probability distribution. In order for this procedure to work for all sources satisfying the min-entropy constraint, it can be shown that the minimal size of the seed \(d\) is logarithmic in \(n\)~\cite{Vadhan11}. Since the total variation distance between two distributions can itself be written as an optimization over test functions, the performance of a given extractor \(\mathsf{Ext}\) can be cast as a bilinear optimization program. The objective function in the general program~\eqref{eq:program_classical} is chosen to be indexed by elements \(i \in \left[2^n\right]\) and pairs \((s,j) \in \left[2^{d+m}\right]\),
\begin{align}
A_{i,(s,j)}:=\frac{1}{2^d} \delta_{f_s(i)=j} - \frac{1}{2^{d+m}}\,. 
\end{align}
The constraints are the positivity and normalization of the input distribution \(z_i\), as well as the min-entropy requirement, and the restriction to test functions as given by positive numbers \(y_{(s,j)}\). We arrive at
\begin{align}\label{eq:classical_extractor}
\begin{aligned}
\mathrm{Err}(\mathsf{Ext},k)\eqdef\;&\underset{(z_i,y_{(s,j)})}{\text{maximize}}
& & \frac{1}{2^d} \sum_{i,(s,j)} \left[ \delta_{f_s(i)=j} - \frac{1}{2^{m}}\right] z_i y_{(s,j)}  \\
& \text{subject to}
& & 0\leq z_i \leq 2^{-k}\quad\forall i\in\left[2^n\right]\\
& & &\sum_i z_i =1\\
& & & 0\leq y_{(s,j)} \leq 1\quad\forall (s,j) \in \left[2^{d+m}\right]\,.
\end{aligned}
\end{align}
Here, the parameter \(k\) measures the amount of initial min-entropy. As discussed before, the constraints on the positive numbers \(y_{(s,j)}\) just ensure that it is a test function, and hence the program becomes
\begin{align}\label{eq:extractor_1norm}
\begin{aligned}
\mathrm{Err}(\mathsf{Ext},k)=\;&\underset{z_i}{\text{maximize}}
& & \frac{1}{2}\cdot\frac{1}{2^d} \sum_{(s,j)} \left|\sum_i\delta_{f_s(i)=j} z_i - \frac{1}{2^{m}} \right|  \\
& \text{subject to}
& & 0\leq z_i \leq 2^{-k}\quad\forall i\in\left[2^n\right]\\
& & &\sum_i z_i =1\,,
\end{aligned}
\end{align}
the total variation distance of the output distribution to the uniform distribution on \(m\) bits. The average over the choice of the seed value \(s\) outside of the absolute value ensures that the closeness to the uniform distribution holds even conditioned on the seed. We also call
\begin{align}
C(\mathsf{Ext},k):=\mathrm{Err}(\mathsf{Ext},k)
\end{align}
the classical value of $\mathsf{Ext}$. We can now apply our general quantization procedure to~\eqref{eq:classical_extractor}. Assuming for simplicity that the underlying Hilbert space is of finite-dimensions and repeating the steps~\eqref{eq:introsigma} -~\eqref{eq:alternative_program}, we arrive at the program (for the general case see again Appendix~\ref{app:alternative}),
\begin{align}\label{eq:quantum_extractor}
\begin{aligned}
\mathrm{Err}^*(\mathsf{Ext},k)\eqdef\;&\underset{(\sigma,E_i,D_{(s,j)})}{\text{maximize}}
& & \frac{1}{2^d} \sum_{i,(s,j)} \left[ \delta_{f_s(i)=j} - \frac{1}{2^{m}}\right] \tr\left[E_i \sigma^{1/2} D_{(s,j)} \sigma^{1/2}\right] \\
& \text{subject to}
& & \sigma\succeq0,\;\tr[\sigma] = 1\\
& & &0\preceq E_i \preceq 2^{-k} \1\quad\forall i\in\left[2^n\right]\\
& & &\sum_i E_i =\1\\
& & & 0 \preceq D_{(s,j)} \preceq \1\quad\forall (s,j) \in \left[2^{d+m}\right]\,.
\end{aligned}
\end{align}
Setting $\sigma_i\eqdef\sigma^{1/2}E_i\sigma^{1/2}$ and again by the duality of the 1-norm to the \(\infty\)-norm we can rewrite the program as
\begin{align}\label{eq:q_extractor}
\begin{aligned}
\mathrm{Err}^*(\mathsf{Ext},k)=\;&\underset{\sigma_i}{\text{maximize}}
& & \frac{1}{2}\cdot\frac{1}{2^d} \sum_{(s,j)} \left\Vert\sum_i\left[\delta_{f_s(i)=j}-\frac{1}{2^{m}}\right]\sigma_{i} \right\Vert_1 \\
& \text{subject to}
& & 0\preceq \sigma_i \preceq 2^{-k} \sum_i\sigma_i\quad\forall i\in\left[2^n\right]\\
& & &\sum_i\tr[\sigma_i] = 1\,.
\end{aligned}
\end{align}
From this we define the normalized classical-quantum state
\begin{align}\label{eq:cq_worst}
\sigma\eqdef\sum_i \ket{i}\bra{i} \otimes \sigma_i\quad\mathrm{satisfying}\quad\sigma \preceq 2^{-k}\cdot\1 \otimes \left(\sum_i\sigma_i\right)\,,
\end{align}
and hence the objective function in~\eqref{eq:q_extractor} corresponds to the total variation distance of the output to a quantum state that is of the form uniform distribution on \(m\) bits tensor the reduced state on the quantum system. This means that an adversary cannot tell the output apart from the uniform distribution even when having access to the quantum system as well as the value of the seed. Here, the inequality condition in~\eqref{eq:cq_worst} defines the worst case quantum conditional min-entropy that is, e.g., discussed in~\cite[Appendix B]{Tomamichel11}. However, in the literature the average case quantum conditional min-entropy is more commonly used (as discussed in~\cite{Renner05}). This gives rise to the following so-called quantum value of $\mathsf{Ext}$,
\begin{align}
\begin{aligned}
Q(\mathsf{Ext},k)\eqdef\;&\underset{\left(\sigma_i,\omega\right)}{\text{maximize}}
& & \frac{1}{2}\cdot\frac{1}{2^d} \sum_{i,(s,j)} \left\Vert\sum_i\left[\delta_{f_s(i)=j}-\frac{1}{2^{m}}\right]\sigma_{i}\right\Vert_1 \\
& \text{subject to}
& & \omega\succeq0,\;\tr[\omega] = 1\\
& & &0\preceq \sigma_i \preceq 2^{-k} \omega\quad\forall i\in\left[2^n\right]\\
& & &\sum_i\tr[\sigma_i] = 1\,.
\end{aligned}
\end{align}
However, it follows from the equivalence of the worst case and average case quantum conditional min-entropy~\cite[Lemma 20]{Tomamichel11} that there cannot be a large gap between $\mathrm{Err}^*$ and $Q$.
\begin{proposition}
For $\eps>0$ we have
\begin{align}
Q(\mathsf{Ext},k)\leq\mathrm{Err}^*\left(\mathsf{Ext},k-\log\left(1/\eps^2+1\right)\right)+\eps\,.
\end{align}
\end{proposition}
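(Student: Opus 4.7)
The plan is to take any feasible solution of the $Q(\mathsf{Ext},k)$ program, convert it via the cited lemma into a nearby feasible solution of the $\mathrm{Err}^*(\mathsf{Ext},k-\log(1/\eps^2+1))$ program, and then show that the resulting change in the objective is at most $\eps$. The whole approach rests on recognizing that the two programs encode exactly the \emph{average-case} versus \emph{worst-case} quantum conditional min-entropy, and that Lemma~20 of \cite{Tomamichel11} is precisely a quantitative equivalence between these two notions.

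Concretely, I would first fix a feasible $(\sigma_i,\omega)$ for $Q(\mathsf{Ext},k)$ and repackage it as the classical-quantum state $\rho_{AB}\eqdef\sum_i\ketbra{i}{i}_A\otimes\sigma_i^B$. The constraints $\sigma_i\preceq 2^{-k}\omega$ and $\sum_i\tr[\sigma_i]=1$ are equivalent to $\rho_{AB}\preceq 2^{-k}\idty_A\otimes\omega$ and $\tr[\rho_{AB}]=1$, which is precisely the average-case condition $H_{\min}(A|B)_\rho\geq k$ (witness $\omega$ on $B$). Dually, the $\mathrm{Err}^*(\mathsf{Ext},k')$ constraint $\sigma_i\preceq 2^{-k'}\sum_j\sigma_j$ reads $\rho_{AB}\preceq 2^{-k'}\idty_A\otimes\rho_B$, which is the worst-case condition $H_{\min}^{\uparrow}(A|B)_\rho\geq k'$. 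The common objective can be written as $\tfrac12\|\Lambda(\rho_{AB})-\tau_{SJ}\otimes\rho_B\|_1$, where $\Lambda$ is the CPTP map that appends a uniform seed $s\in[2^d]$ and writes the extractor output $f_s(A)$ into a register $J$, and $\tau_{SJ}$ is the tensor product of the uniform distributions on $[2^d]$ and $[2^m]$.

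Next I would invoke Lemma~20 of \cite{Tomamichel11} to obtain a cq-state $\rho'_{AB}=\sum_i\ketbra{i}{i}_A\otimes\sigma'_i$ satisfying $\tfrac12\|\rho_{AB}-\rho'_{AB}\|_1\leq\eps$ and $H_{\min}^{\uparrow}(A|B)_{\rho'}\geq k-\log(1/\eps^2+1)$. The cq form can be preserved by dephasing on $A$ afterwards, which neither increases the trace distance nor decreases the worst-case min-entropy (a short direct calculation: if $\rho'\preceq 2^{-k'}\idty_A\otimes \rho'_B$, then sandwiching with $\ket{i}\bra{i}_A$ gives $\sigma'_i\preceq 2^{-k'}\rho'_B=2^{-k'}\sum_j\sigma'_j$). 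Then $(\sigma'_i)$ is feasible for $\mathrm{Err}^*(\mathsf{Ext},k-\log(1/\eps^2+1))$, and from the triangle inequality together with contractivity of the trace norm under the CPTP maps $\Lambda$ and partial trace onto $B$, one concludes that the objective evaluated at $\rho'$ differs from that at $\rho$ by at most $\eps$, giving the claimed bound after taking the supremum over feasible $(\sigma_i,\omega)$.

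The hard part will be this last step's constant-tracking: a naive two-term triangle inequality separating the perturbations of $\Lambda(\rho)$ and of $\tau_{SJ}\otimes\rho_B$ yields a slack of $2\eps$ rather than the claimed $\eps$. The standard workaround is to invoke the cited lemma in its purified-distance formulation (the setting in which the smooth min-entropy is originally defined in \cite{Tomamichel11}) and then combine both perturbations inside a single purified-distance estimate before converting to trace distance only at the very end, using the monotonicity and joint-convexity properties of the purified distance. A minor secondary verification is that Lemma~20, stated for arbitrary bipartite states, indeed applies without loss of generality to the cq setting, which is handled by the dephasing argument described above.
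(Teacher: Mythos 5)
The paper itself does not prove this proposition; it simply points to Lemma~20 of~\cite{Tomamichel11}, so your reconstruction of the missing argument is a fair thing to attempt, and your high-level plan (repackage a feasible $(\sigma_i,\omega)$ as the cq-state $\rho_{AB}=\sum_i\ketbra{i}{i}\otimes\sigma_i$, identify the two feasibility conditions as average-case vs.\ worst-case conditional min-entropy, smooth, dephase back to cq-form, and track the change in the objective) is exactly the intended one.

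There is, however, a real gap in how you propose to close the $2\eps$ versus $\eps$ discrepancy. You correctly observe that a two-term triangle inequality, applied to a state $\rho'$ with $\tfrac12\|\rho-\rho'\|_1\leq\eps$, only gives a slack of $2\eps$ in the objective. But the remedy you sketch --- ``combine both perturbations inside a single purified-distance estimate'' --- does not deliver the factor you need: purified distance only dominates trace distance (so $P\leq\eps$ still only gives $D\leq\eps$), and bundling $\Lambda(\rho)$ and $\tau_{SJ}\otimes\rho_B$ into one joint object does not remove the factor of $2$ either (the natural bound from joint monotonicity is $\sqrt{2}\eps$ at best, since fidelity multiplies across tensor factors). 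As stated, this step does not go through.

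The actual resolution is hiding inside the proof of the cited lemma, not in its black-box statement. The smoothed state in Lemma~20 is constructed as the convex mixture $\tilde\rho_{AB}=(1-\lambda)\rho_{AB}+\lambda\,\tfrac{\1_A}{2^n}\otimes\omega$, whose $B$-marginal is $(1-\lambda)\rho_B+\lambda\omega\succeq\lambda\omega$. The requirement $\tilde\sigma_i\preceq 2^{-k'}\tilde\rho_B$ with $k'=k-\log(1/\eps^2+1)$ forces $\lambda=\eps^2/(1+\eps^2)$, which gives the purified-distance bound $P(\rho,\tilde\rho)\leq\sqrt{\lambda}\leq\eps$ appearing in the lemma. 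But the \emph{trace} distance of this mixture to $\rho$ is only $\tfrac12\|\rho-\tilde\rho\|_1\leq\lambda=\eps^2/(1+\eps^2)$, quadratically smaller than $\eps$, and then the naive two-term estimate $2\lambda=2\eps^2/(1+\eps^2)\leq\eps$ (equivalent to $(\eps-1)^2\geq0$) already suffices. So the fix is not to upgrade the metric but to use the sharper trace-distance bound that comes for free from the explicit construction. Your dephasing observation and the identification of the two min-entropy conditions are both correct and worth keeping.
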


We conclude that $\mathrm{Err}^*(\mathsf{Ext},k)$ captures to what extent $\mathsf{Ext}$ is a quantum-proof extractor. Hence, this property can be tested by our SDP hierarchy~\eqref{eq:sdp_odd}. We give the full first level $\mathrm{Err}^{\sdp_1}(\mathsf{Ext},k)$ as in~\eqref{eq:generic-first-level} in Appendix~\ref{app:extractor}. For our purposes, however, it will be sufficient to work with the following simplified upper bound $\mathrm{Err}^{\overline{\sdp}_1}(\mathsf{Ext},k)\geq\mathrm{Err}^{\sdp_1}(\mathsf{Ext},k)$ that ignores some of the constraints:
\begin{align}\label{eq:sdp1_extractor}
\begin{aligned}
\mathrm{Err}^{\overline{\sdp}_1}(\mathsf{Ext},k)\eqdef\;&\underset{\Omega^1}{\text{maximize}}
& & \frac{1}{2^d} \sum_{i,(s,j)} \left[ \delta_{f_s(i)=j} - \frac{1}{2^{m}}\right] \Omega^1_{(i),(s,j)} \\
& \text{subject to}
& & \Omega^1  \in \mathrm{Pos}(1+2^n + 2^{d+m})\\
& & &\Omega^1_{w,w'}\geq0\quad\forall w,w'\in \Sigma_1\\
& & &\Omega^1_{\emptyset,\emptyset} = 1,\;\Omega^1_{\emptyset,w}=\sum_{i}\Omega^1_{(i),w}\quad\forall w\in \Sigma_1\\
& & & 2^{-k} \Omega^1_{\emptyset,w}\geq\Omega^1_{(i),w}\quad \forall i\in \left[2^n\right],\;\forall w\in\Sigma_1\\
& & &\Omega^1_{\emptyset,w}\geq\Omega^1_{(s,j),w}\quad \forall (s,j)\in \left[2^{d+m}\right],\;\forall w\in\Sigma_1\,,
\end{aligned}
\end{align}
where again some of the positivity constraints on the matrix elements are new as compared to the NPA hierarchy. We emphasize that these conditions are important to obtain the following bounds on the gap between \(\mathrm{Err}(\mathsf{Ext},k)\) and \(\mathrm{Err}^{\overline{\sdp}_1}(\mathsf{Ext},k)\), which then also give an upper estimate for the error of the quantum-proof extractor~\eqref{eq:quantum_extractor}.

\begin{theorem}\label{thm:extractor}
We have that
\begin{align}\label{eq:extractor_thm1}
\mathrm{Err}^{\overline{\sdp}_1}(\mathsf{Ext},k) \leq \sqrt{2}\sqrt{2^m} \sqrt{\mathrm{Err}(\mathsf{Ext},k)}\,,
\end{align}
as well as
\begin{align}\label{eq:extractor_thm2}
\mathrm{Err}^{\overline{\sdp}_1}(\mathsf{Ext},k)\leq6\,K_G\,2^{n-k}\,\mathrm{Err}(\mathsf{Ext},k-1)\,
\end{align}
where $K_G$ denotes Grothendieck's constant.
\end{theorem}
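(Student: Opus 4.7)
The plan is to exploit the positive semi-definiteness of $\Omega^1$ to write it as a Gram matrix of vectors, and then reduce to the classical extractor error using Cauchy--Schwarz plus an $L^1$--$L^2$ duality (for \eqref{eq:extractor_thm1}) or Grothendieck's inequality (for \eqref{eq:extractor_thm2}). First I would set $\Omega^1_{u,v} = \langle v_u, v_v\rangle$ with $\|v_\emptyset\|^2 = 1$ and read off, from the SDP constraints, that $v_\emptyset = \sum_i v_{(i)}$, that $p_i := \langle v_\emptyset, v_{(i)}\rangle$ is a distribution with $p_i \leq 2^{-k}$, that $q_{(s,j)} := \langle v_\emptyset, v_{(s,j)}\rangle \in [0,1]$, that all inner products are non-negative, and, crucially, that $\|v_{(i)}\|^2 \leq 2^{-k} p_i \leq 2^{-2k}$ (so $\|v_{(i)}\| \leq 2^{-k}$) and $\|v_{(s,j)}\|^2 \leq q_{(s,j)} \leq 1$. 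The objective becomes $\Phi = \tfrac{1}{2^d}\sum_{i,(s,j)} A_{i,(s,j)} \langle v_{(i)}, v_{(s,j)}\rangle$ with $A_{i,(s,j)} = \delta_{f_s(i)=j} - 2^{-m}$.

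For \eqref{eq:extractor_thm1} I would apply Cauchy--Schwarz over $j$ at each fixed $s$ and use $\sum_j \|v_{(s,j)}\|^2 \leq \sum_j q_{(s,j)} \leq 2^m$ to get
\[
\Phi \;\leq\; \sqrt{2^m}\,\sqrt{\,\mathbb{E}_s \sum_j \Big\|\textstyle\sum_i A_{i,(s,j)} v_{(i)}\Big\|^2\,}
\;=\; \sqrt{2^m}\,\sqrt{\,\mathbb{E}_s \sum_j \|V_{s,j} - 2^{-m} v_\emptyset\|^2\,},
\]
where $V_{s,j} := \sum_{i \in f_s^{-1}(j)} v_{(i)}$. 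The remaining step is to show that this SDP $L^2$-type quantity is at most $2\,\mathrm{Err}(\mathsf{Ext},k)$, by mimicking the classical inequality $\sum_j (p_{s,j}-2^{-m})^2 \leq \|p_s - u\|_\infty \cdot \|p_s - u\|_1 \leq 2\,\mathrm{TV}(P_s,U)$. The SDP version uses the cross-positivity constraint $\langle V_{s,j}, v_w\rangle \leq 2^{-k}|f_s^{-1}(j)|\langle v_\emptyset, v_w\rangle$ (obtained by summing the constraint $\Omega^1_{(i),w} \leq 2^{-k}\Omega^1_{\emptyset,w}$ over $i\in f_s^{-1}(j)$) together with non-negativity of the $v_w$-inner products, in order to play the role of the $L^\infty$ bound on $p_{s,\cdot} - 2^{-m}$.

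For \eqref{eq:extractor_thm2} I would apply Grothendieck's inequality to the bilinear form $\Phi$, obtaining
\[
\Phi \;\leq\; \frac{K_G}{2^d}\,\max_{s_i,\,t_{(s,j)}\in\{\pm1\}} \Big|\sum_{i,(s,j)} A_{i,(s,j)}\,\|v_{(i)}\|\,s_i\,\|v_{(s,j)}\|\,t_{(s,j)}\Big|.
\]
Then decompose $\|v_{(i)}\| s_i = a_i^+ - a_i^-$ with $a_i^\pm \in [0, 2^{-k}]$, and $\|v_{(s,j)}\| t_{(s,j)} = y^+_{(s,j)} - y^-_{(s,j)}$ with $y^\pm_{(s,j)} \in [0,1]$. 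This splits the sum into four classical-style bilinear forms $\tfrac{1}{2^d}\sum A_{i,(s,j)} a_i y_{(s,j)}$. For each one, the trivial bound $\sum_i a_i \leq \sum_i \|v_{(i)}\| \leq 2^n \cdot 2^{-k} = 2^{n-k}$ gives a normalization constant. A small case analysis on whether $\sum_i a_i \geq 1/2$ (rescale directly to get a distribution of min-entropy $\geq k-1$) or $\sum_i a_i < 1/2$ (mix with the uniform distribution on $[2^n]$, noting that $2^{-k} + 2^{-n} \leq 2^{-(k-1)}$ when $n \geq k$) produces a valid min-entropy-$(k-1)$ distribution, so each piece is at most $2^{n-k}\cdot 2^d \,\mathrm{Err}(\mathsf{Ext},k-1)$. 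Summing the four contributions and absorbing the case-analysis overhead into the constant yields $6\,K_G\,2^{n-k}\,\mathrm{Err}(\mathsf{Ext},k-1)$.

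The hardest step will be the reduction $\mathbb{E}_s \sum_j \|V_{s,j}-2^{-m}v_\emptyset\|^2 \leq 2\,\mathrm{Err}(\mathsf{Ext},k)$ in \eqref{eq:extractor_thm1}: in the classical case one has $M = pp^T$ and the bound reduces to a one-line inequality for probability vectors, but here $M = [\langle v_{(i)}, v_{(i')}\rangle]$ is a general non-negative PSD matrix with $M_{i,i} \leq 2^{-k} p_i$ and $\sum_{i,i'} M_{i,i'} = 1$, so the off-diagonal behaviour must be controlled via the extra SDP constraints rather than via rank-one arithmetic. In \eqref{eq:extractor_thm2} the only delicate point is the normalization bookkeeping in the case split, which is what forces the loss of one bit of min-entropy (the "$k-1$").
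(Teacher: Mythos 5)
Your proposal follows the paper's proof essentially step for step: write $\Omega^1$ as a Gram matrix, use two applications of Cauchy--Schwarz to pick up the factor $\sqrt{2^m}$ and pass to an $\ell^2$-type quantity for \eqref{eq:extractor_thm1}, and use Grothendieck's inequality followed by a positive/negative decomposition with a renormalize-to-a-source case analysis for \eqref{eq:extractor_thm2}. The one place your sketch deviates is exactly the step you flag as hardest: rather than an $\|p_s-u\|_\infty\cdot\|p_s-u\|_1$-type bound via summed cross-positivity constraints, the paper expands $\sum_{i,i'}[\delta_{f_s(i)=j}-2^{-m}][\delta_{f_s(i')=j}-2^{-m}]\Omega^1_{(i),(i')}$, applies H\"older in the index $i$ (using $|\delta_{f_s(i)=j}-2^{-m}|\le1$), and observes that each normalized row $\Omega^1_{(i),\cdot}/\bar{\Omega}^1_{(i)}$ is itself a valid min-entropy-$k$ source because of the constraint $\Omega^1_{(i),(i')}\le 2^{-k}\bar{\Omega}^1_{(i')}$, so averaging with the weights $\bar{\Omega}^1_{(i)}$ (which sum to one) gives $2\,\mathrm{Err}(\mathsf{Ext},k)$ directly; you should adopt that cleaner route. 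The remaining differences --- your four-term splitting after Grothendieck versus the paper's two-term split after first eliminating $b_{(s,j)}$ by maximization, and your threshold $1/2$ versus the paper's $1$ in the case analysis --- are cosmetic, but do verify your bookkeeping reproduces the stated constant $6$ rather than something slightly larger.
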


The proof is based on ideas from~\cite[Theorem 5]{berta15} and we present it in full detail in Appendix~\ref{app:extractor}. We remark that compared to the relaxation in~\cite[Theorem 4]{berta15}, the SDP relaxation~\eqref{eq:sdp1_extractor} has some new and different constraints. The additional constraints are introduced by the sub-matrices where one variable is equal to the empty word $\emptyset$. Using these additional constraints we have the desirable property that the first level SDP relaxation $\mathrm{Err}^{\sdp_1}(\mathsf{Ext},k)$ is always bounded by one,\footnote{Both $\mathrm{Err}^*(\mathsf{Ext},k)$ and $Q(\mathsf{Ext},k)$ are always bounded by one whereas the relaxation in~\cite[Theorem 4]{berta15} can get arbitrarily large in general.}
\begin{align}
\mathrm{Err}^{\overline{\sdp}_1}(\mathsf{Ext},k)\leq \frac{1}{2^d} \sum_{i,(s,j) : f_s(i) = j} \left(1 - \frac{1}{2^m}\right) \Omega^1_{(i), (s,j)}\leq \frac{1}{2^d} \sum_{i,s} \left(1 - \frac{1}{2^m}\right)\Omega^{1}_{(i),\emptyset} = 1 - \frac{1}{2^m}\,. 
\end{align}
This implies that the argument in~\cite[Theorem 8]{berta15} showing a large gap between the SDP value and the quantum value does not apply for $\mathrm{Err}^{\overline{\sdp}_1}(\mathsf{Ext},k)$. We leave it as an open question whether there can be a large gap between $\mathrm{Err}^{\overline{\sdp}_1}(\mathsf{Ext},k)$ or $\mathrm{Err}^{\sdp_1}(\mathsf{Ext},k)$ and $\mathrm{Err}^*(\mathsf{Ext},k)$.\footnote{Some more results on how to extend the argument in~\cite[Theorem 8]{berta15} to other SDP relaxations can be found in~\cite{Fer15}.}

Finally, we point out that using ideas similar to the ones presented in this section, one can also construct a hierarchy for more general objects called quantum-proof randomness condensers~\cite{Vadhan11,FullVersion}. It would be interesting to explore in more detail the applications of these relaxations to condensers.


\subsection{Optimization over the cone \(\mathcal{CS}_+\)}

Here we show that one can use the hierarchy introduced in Section~\ref{sec:quantum} to give a SDP hierarchy of outer approximation for the cone $\mathcal{CS}^N_+$ defined in~\cite{LP13},
\begin{align}
  \mathcal{CS}^N_+ \eqdef \Big\{ \Gamma \in \mathrm{Pos}(N) \, : \, \Gamma_{\alpha, \beta} = \tr[ X_\alpha X_\beta ] \text{ with } X_1, \dots, X_N \in \mathrm{Pos}(d) \text{ for some } d \in \mathbb{N} \Big\}\,.
\end{align}
A typical program considered by Burgdorf, Laurent and Piovesan~\cite{BLP15} now reads as follows:
\begin{align}\label{eq:optimcsppluscone}
\begin{aligned}
p^{\mathcal{CS}_+}\left[A, \{F^i\}_i\right]\eqdef \; & \underset{\Lambda}{\text{maximize}} & & \sum_{\alpha,\beta} A_{\alpha,\beta} \Lambda_{\alpha,\beta}\\
& \text{subject to} 
& & \Lambda \in \mathcal{CS}^N_{+}\\
& & & \sum_{\alpha,\beta} F^i_{\alpha \beta} \Lambda_{\alpha,\beta} = G^i \quad \forall i \,.
\end{aligned}
\end{align}
Here again, $A_{\alpha, \beta}$ are real numbers specifying the objective function, and the real numbers $F^i_{\alpha, \beta}$, $G^i$ specify additional equality constraints. Specific instances include the quantum versions of stability and chromatic numbers for graphs; see e.g.,~\cite{CMNSW07,BLP15}. Note that as we do not distinguish between two types of variables here, we use $N$ instead of $N+M$ for the number of variables. As in Assumption~\ref{assmp:boundedop}, we assume that the constraints $ \sum_{\alpha,\beta} F^i_{\alpha \beta} \Lambda_{\alpha,\beta} = G^i$ are such that they imply $X_\alpha \preceq C\idty_d$ for some constant. For all applications we know of, this is satisfied.  


The above optimization problem is closely related to the tracial moment problem, tracial optimization of non-commutative polynomials as studied extensively by Burgdorf, Cafuta, Klep, and Povh~\cite{BCKP13,BK12,KP16}. In particular, Klep and Povh~\cite{KP16} studied the optimization problem of minimizing the trace of a polynomial in non-commutative variables under further positivity constraints and derived a convergent SDP hierarchy. In what is next, we describe how our general approach can be used to derive a new hierarchy especially suited for quadratic polynomials and thus for optimization over $\mathcal{CS}^N_+$.

Following the procedure given in Section \ref{sec:quantum}, the $n$-th level SDP relaxation is given by optimizing over a positive semidefinite matrix $\Omega^n$ whose rows and columns are indexed by words of length up to $n$ on the alphabet $\{1, \dots, N\}$. These words span a the complex linear subspace of \(\Sigma_\infty\) which we denote by $\Sigma_{n}$. The entries $\Omega_{(\alpha),(\beta)}$ corresponding to words of length $1$ are the candidate entries for $\Lambda_{\alpha,\beta}$ in the program~\eqref{eq:optimcsppluscone}. The fact that $\Lambda \in \mathcal{CS}^N_+$ allows us to add additional constraints as described in~\eqref{eq:new_matrix_constraint}. When $n$ is odd and writing
\begin{align}
\delta = \frac{N^{n+1} - 1}{N-1}\quad\mathrm{and}\quad\delta' = \left(\frac{N^{(n-1)/2+1} -1}{N-1}\right)^2\,,
\end{align}
we find
\begin{align}\label{eq:sdp_cspplus}
\begin{aligned}
\sdp_n[A, \{F^i\}_i] \eqdef\; &\underset{\Omega^n}{\text{maximize}} & & \sum_{\alpha,\beta} A_{\alpha,\beta} \Omega^n_{(\alpha),(\beta)}\\
& \text{subject to}
& & \Omega^n \in \mathrm{Pos}\left(\delta\right)\\
& & & \Omega^n_{\emptyset, \emptyset} = 1\\
& & &\sum_{r,s,u,v \in \Sigma_{(n-1)/2}} \Omega^n_{r^* \circ (\alpha) \circ s, u^* \circ (\beta) \circ v} \ket{r} \bra{s} \otimes \ket{u} \bra{v} \in \mathrm{Pos}(\delta') \quad \forall \alpha,\beta \in [N]\\
& & & \Omega^n_{(\alpha) \circ u, v \circ (\beta)} = \Omega^n_{u \circ (\beta), (\alpha) \circ u}  \quad \forall \alpha,\beta \in [N], u, v \in \Sigma_{n-1}\\
& & & \sum_{\alpha,\beta} F^i_{\alpha \beta} \Omega^n_{(\alpha),(\beta)} = G^i \Omega^n_{\emptyset, \emptyset} \,.
\end{aligned}
\end{align}
Recall that $\circ$ denotes the concatenation of words and $(\alpha)$ refers to a word of length $1$ with the symbol $\alpha$. Note that $n=1$ corresponds to optimizing over the doubly non-negative cone. The way we constructed $\sdp_n[A, \{F^i\}_i]$ as a relaxation of $p[A, \{F^i\}_i]$ is similar to what we did in previous sections. Let $\Lambda \in \mathcal{CS}_+^N$, then there exists positive semidefinite matrices $X'_1, \dots, X'_{N} \in \mathrm{Pos}(d)$ such that $\Gamma_{\alpha, \beta} = \tr[X'_{\alpha} X'_{\beta}]$. First, let us write $X_{\alpha} = \sqrt{d} X'_{\alpha}$ and for any word $u \in \Sigma_{n}$, $X_{u}$ as the product of the matrices corresponding to its symbols: $X_{u} = X_{u_1} \cdots X_{u_n}$ with $X_{\emptyset} = \1$. Recalling that $u^*$ is the word $u$ inverted, we define
\begin{align}
\Omega^n_{u,v}:=\tr\left[\frac{\1}{d} \cdot X_{u^*} X_{v}\right]\,.
\end{align}
First, as $\Omega^n$ is the (scaled) Gram matrix of the family $\{X_{u} : u \in \Sigma_{n}\}$, it is positive semidefinite. Also $\Omega^n_{1,1} = \tr\left[\1/d\right] = 1$. Moreover, for a vector $\ket{\phi} = \sum_{r,u} c_{r,u} \ket{r} \ket{u}$, we have
\begin{align}
&\bra{\phi} \sum_{r,s,u,v \in \Sigma_{(n-1)/2}} d \cdot \Omega^n_{r^* \circ (\alpha) \circ s, u^* \circ (\beta) \circ v} \ket{r} \bra{s} \otimes \ket{u} \langle v|\phi\rangle\notag\\
&= \sum_{r,s,u,v} \bar{c}_{r,u} c_{s,v} \tr\Big[X_{s^*} X_{\alpha} X_{r} X_{u^*} X_{\beta} X_{v}\Big] \\
&=  \tr\left[\left(\sum_{s,v} c_{s,v} X_{v} X_{s^*} \right) X_{\alpha} \left(\sum_{r,u} \bar{c}_{r,u} X_{r} X_{u^*} \right) X_{\beta}\right] \\
&=  \tr\left[\left(\sum_{s,v} c_{s,v} X_{v} X_{s^*} \right) X_{\alpha} \left(\sum_{s,v} c_{s,v} X_{v} X_{s^*} \right)^* X_{\beta}\right] \geq 0 \,.
\end{align}
The constraint
\begin{align}
\Omega^n_{(\alpha) \circ u, v \circ (\beta)} = \Omega^n_{u \circ (\beta), (\alpha) \circ u}
\end{align}
corresponds to the cyclicity of the trace,
\begin{align}
d \cdot \Omega^n_{(\alpha) \circ u, v \circ (\beta)} = \tr\Big[X_{u^{*}} X_{\alpha} X_{v} X_{\beta}\Big] = \tr\Big[X_{\beta} X_{u^{*}} X_{\alpha} X_v \Big] = d \cdot \Omega^n_{u \circ (\beta), (\alpha) \circ v}\,.
\end{align}
Note that such a constraint did not appear in our other examples as we were optimizing over the state involved in defining $\Omega^n$. In this example, we want to fix the state to be maximally mixed, $\1/d$, and this is reflected in the cyclicity condition. We can also define the SDPs for even $n$ similarly as in~\eqref{eq:sdp_even}.

We implemented the SDP relaxations to test whether a given matrix $K$ is in $\mathcal{CS}_+$. In~\cite{FW14,LP13} it was shown that the matrix
\begin{align}\label{eq:matrix_k}
K \eqdef \left(\begin{smallmatrix}
4 & 0 & 2 & 2 & 0\\
0 & 4 & 0 & 2 & 2\\
2 & 0 & 4 & 0 & 3\\
2 & 2 & 0 & 4 & 0\\
0 & 2 & 3 & 0 & 4
\end{smallmatrix}\right)
\end{align}
is not in the closure of $\mathcal{CS}_+$. Using CVX for MATLAB~\cite{cvx,gb08}, we were able to numerically certify using level $n=3$ of the hierarchy that the matrix is indeed not in the cone $\mathcal{CS}_+$.\footnote{The code is available at~\url{http://www.omarfawzi.info}.}

The convergence proof of Theorem~\ref{thm:hierarchy_general} covers the above case as well, which then raises the question how the limiting point
\begin{align}
p^*[A, \{F^i\}_i]:=\lim_{n\to\infty}\sdp_n\left[A, \{F^i\}_i\right]
\end{align}
of the programs~\eqref{eq:sdp_cspplus} can be represented. Not surprisingly, we cannot assert that it corresponds to an element in the cone \(\mathcal{CS}_+\) which asks for an underlying finite-dimensional Hilbert space. However, as shown in the Appendix~\ref{app:connes}, the assumption that Connes' embedding conjecture~\cite{Ozawa:2013hn,Connes:1976ta} has a positive answer implies that the value \(p^*[A, \{F^i\}_i]\) agrees with the program~\eqref{eq:optimcsppluscone}, or more precisely, with its value if optimized over the closure \(\overline{\mathcal{CS}}_+\) of the cone \(\mathcal{CS}_+\).\footnote{We write maximize in~\eqref{eq:optimcsppluscone}, which is consistent with the statement that the maximum is not attained (cf.~Footnote~\ref{ft:maximize}). Clearly, the limiting point of our SDP hierarchy then corresponds to the supremum, and hence to the optimization over the closure of the cone \(\mathcal{CS}_+\).}

\begin{corollary}
For any $n \geq 1$ we have
\begin{align}
p^{\mathcal{CS}_+}[A, \{F^i\}_i] \leq \sdp_n[A, \{F^i\}_i]\,.
\end{align}
Moreover, provided the Connes embedding conjecture has a positive answer~\cite{Connes:1976ta,Ozawa:2013hn}, we have
\begin{align}\label{eq:convergence_csplus}
p^{\overline{\mathcal{CS}}_+}[A, \{F^i\}_i] = p^*[A, \{F^i\}_i]\,.
\end{align}
\end{corollary}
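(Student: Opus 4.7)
My plan has two parts, corresponding to the two claims.

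For the inequality $p^{\mathcal{CS}_+}[A,\{F^i\}_i]\leq \sdp_n[A,\{F^i\}_i]$, I would simply lift any feasible $\Lambda\in \mathcal{CS}_+^N$ to a feasible $\Omega^n$ for the SDP~\eqref{eq:sdp_cspplus} achieving the same objective value. Given a Gram representation $\Lambda_{\alpha,\beta}=\tr[X'_\alpha X'_\beta]$ with $X'_\alpha\in\mathrm{Pos}(d)$, I set $X_\alpha=\sqrt{d}X'_\alpha$, extend to words by $X_u=X_{u_1}\cdots X_{u_{\ell(u)}}$ with $X_\emptyset=\idty$, and define
\begin{align}
\Omega^n_{u,v}\eqdef\tr\!\left[\tfrac{\idty}{d}\, X_{u^*}X_v\right].
\end{align}
This is exactly the construction laid out in the paragraph preceding the corollary, so all the SDP constraints can be read off from there: $\Omega^n_{\emptyset,\emptyset}=\tr[\idty/d]=1$; the matrix $\Omega^n$ is the Gram matrix of $\{X_u\}$ in the inner product $\langle A,B\rangle\eqdef\tr[(A^*B)/d]$ and hence positive semidefinite; the reordered localizing matrices are positive semidefinite by the computation already performed; the cyclicity constraint $\Omega^n_{(\alpha)\circ u,v\circ(\beta)}=\Omega^n_{u\circ(\beta),(\alpha)\circ v}$ is immediate from the cyclicity of the trace; and the affine constraints $\sum_{\alpha,\beta}F^i_{\alpha\beta}\Omega^n_{(\alpha),(\beta)}=G^i$ follow because $\Omega^n_{(\alpha),(\beta)}=\Lambda_{\alpha,\beta}$ and $\Lambda$ was feasible. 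The objective values agree by the same identity, so $p^{\mathcal{CS}_+}\leq \sdp_n$, and by continuity the same bound holds with $p^{\overline{\mathcal{CS}}_+}$ on the left.

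For the equality $p^{\overline{\mathcal{CS}}_+}[A,\{F^i\}_i]=p^*[A,\{F^i\}_i]$ under the Connes embedding conjecture, Theorem~\ref{thm:hierarchy_general} together with the discussion of Appendix~\ref{app:connes} already delivers the hard direction: the limiting value $p^*=\lim_n \sdp_n$ is attained (or approached) by a tuple $(\cH,\sigma,E_\alpha)$ on an \emph{a priori} infinite-dimensional Hilbert space. The cyclicity constraint present in~\eqref{eq:sdp_cspplus} singles out the case in which $\sigma$ plays the role of a tracial state on the von Neumann algebra generated by the positive operators $E_\alpha$, and the affine constraints $\sum F^i_{\alpha\beta}\tr[E_\alpha\sigma^{1/2}E_\beta\sigma^{1/2}]=G^i$ translate into the linear relations on $\Lambda_{\alpha,\beta}$. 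Under Connes' embedding conjecture, this tracial $*$-algebra embeds into an ultrapower of the hyperfinite $\mathrm{II}_1$ factor, so the finitely many moments $\Lambda_{\alpha,\beta}\eqdef\tr[E_\alpha\sigma^{1/2}E_\beta\sigma^{1/2}]$ can be approximated arbitrarily well by matricial traces $\frac{1}{d}\tr[Y_\alpha Y_\beta]$ with $Y_\alpha\in\mathrm{Pos}(d)$; each such approximation lies in $\mathcal{CS}_+^N$ and nearly satisfies the affine constraints. Thus $\overline{\mathcal{CS}}_+^N$ contains feasible points of objective value arbitrarily close to $p^*$, which combined with the first part yields the equality.

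The main obstacle in this plan is the second part, and more precisely the step of checking that the matricial approximation supplied by Connes' embedding not only reproduces the moments but also respects the constraint structure: the $Y_\alpha$ must remain positive semidefinite and the affine relations $\sum F^i_{\alpha\beta}\Lambda_{\alpha,\beta}=G^i$ must survive in the limit. This is essentially the content of the proof sketch deferred to Appendix~\ref{app:connes}, paralleling the analogous arguments for bipartite games in~\cite{Scholz11,Fri12,Ozawa:2013hn}; routine rescaling (using the boundedness built into our analogue of Assumption~\ref{assmp:boundedop}) takes care of preserving positivity and turning the ``nearly feasible'' matricial approximants into genuine elements of $\overline{\mathcal{CS}}_+^N$ achieving objective value at least $p^*-\varepsilon$ for every $\varepsilon>0$.
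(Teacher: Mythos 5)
Your proposal is correct and follows essentially the same route as the paper: part one is exactly the lift $\Lambda\mapsto\Omega^n_{u,v}=\tr[(\idty/d)X_{u^*}X_v]$ written out in the main text, and part two matches the argument of Appendix~\ref{app:connescsp}, which extracts a tracial state $\tau$ from the cyclicity constraints, embeds $\pi_\tau(\cA)''$ into an ultrapower of the hyperfinite $\mathrm{II}_1$ factor via Connes, and then invokes Burgdorf--Laurent--Piovesan to place $\tau(z_\alpha z_\beta)$ in $\overline{\mathcal{CS}}_+$. The concern you flag at the end is actually resolved more cleanly than you suggest: the limit matrix $\Lambda_{\alpha,\beta}=\tau(z_\alpha z_\beta)$ satisfies the affine constraints \emph{exactly} (they hold at every SDP level and pass to the limit), so you only need the matricial approximants to converge to this fixed $\Lambda$, which is precisely what membership in $\overline{\mathcal{CS}}_+$ delivers; there is no separate stability-of-constraints issue to check.
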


In order to prove \eqref{eq:convergence_csplus}, we could first either relate to Klep and Povh's result~\cite{KP16} or make use of the fact that our hierarchy converges to the same value as the NPA hierarchy with added cyclicity constraints. Both approaches would imply that the state \(\tau\) on \(\Sigma_\infty\) constructed in the convergence proof is a tracial state, that is \(\tau(ab) = \tau(ba)\). However, if Connes' embedding conjecture holds, then this state can be represented as a tracial state on the ultrapower of the hyperfinite factor. Finally, Burgdorf, Laurent and Piovesan~\cite{BLP15} have shown that this implies the stated result. For the convenience of the reader, we present such an argument in Appendix~\ref{app:connescsp}.


\appendix

\section{Missing Proofs for General Hilbert Spaces}\label{app:missing}

\subsection{Upper Bounds on the Quantum Value}\label{app:alternative}

Here we show that even if we allow for general Hilbert spaces in the quantum program~\eqref{eq:program_quantum}, than SDP hierarchy~\eqref{eq:sdp_odd}-\eqref{eq:sdp_even} is still a relaxation thereof (in Section~\ref{sec:new_hierarchy} we have only shown this for finite-dimensional spaces). For this we start from the quantum program~\eqref{eq:program_quantum} and upper bound it in a more algebraic form.

Given any feasible solution $(\cH,\psi,E_\alpha,D_\beta)$ of the quantum program~\eqref{eq:program_quantum} we consider the algebra generated by the operators
\begin{align}
D_1, \dots, D_M\,,
\end{align}
acting on the Hilbert space \(\cH\), and denote its closure in operator norm by \(\mathcal{D}\). This is then a \(C^*\)-algebra and we denote the set of of hermitian functionals on  \(\mathcal{D}\) by $\mathcal{D}^*_h$ and the set of positive functionals by \(\mathcal{D}^*_+\). Now the normalized vector \(\psi\in\cH\) induces a normalized positive functional \(\sigma\in\mathcal{D}^*_+\) via
\begin{align}
\mathcal{D}\ni D\mapsto \sigma(D)\eqdef\Scp{\psi}{D\,\psi}\,.
\end{align}
Moreover, the hermitian operators \(E_\alpha\) induce positive functionals \(\rho_\alpha\in\mathcal{D}^*_h\),
\begin{align}
\mathcal{D}\ni D\mapsto \rho_\alpha(D)\eqdef\Scp{\psi}{E_\alpha D\,\psi} = \Scp{\psi}{E_\alpha^{1/2} DE_\alpha^{1/2}\,\psi}\,,
\end{align}
where the last equality follows from the commutativity constraint \([E_\alpha,D_\beta] = 0\). In order to find an upper bound on the quantum value $p^*[A,\cF]$, we consider the following optimization program over all \(C^*\)-algebras $\mathcal{D}$, 
\begin{align}\label{prog:ncbibipartitealg}
\begin{aligned}
\bar{p}[A,\cF]\eqdef\;&\underset{(\mathcal{D},\rho_\alpha,D_\beta)}{\text{maximize}}
& & \sum_{\alpha,\beta} A_{\alpha,\beta}\rho_\alpha(D_\beta)\\
& \text{subject to}
& &\rho_\alpha\in\mathcal{D}^*_h,\,\sigma\in\mathcal{D}_+^*\,\,\text{with}\,\,\sigma(\1) = 1\\
& & &g(\rho_1,\dots,\rho_N,\sigma) \succeq 0\quad \forall g\in\cG\\
& & &k(D_1,\dots,D_M) \succeq 0\quad \forall k\in\cK\,,
\end{aligned}
\end{align}
where the constraints $g\in\cG$ are now understood as
\begin{align}
g(\rho_1,\dots,\rho_N,\sigma)\eqdef g^0\sigma+\sum_{\alpha\in[N]}g^\alpha\rho_\alpha\,,
\end{align}
and positivity is read in the algebraic sense. Note that the boundedness constraints (cf.~Assumption~\ref{assmp:boundedop}) translate to
\begin{align}
\forall\alpha\in[N]:\;\;C\sigma\succeq \rho_\alpha\succeq-C\sigma\quad\mathrm{and}\quad\forall\beta\in[M]:\;\;C\1\succeq D_\beta\succeq-C\1\,.
\end{align}
Now we show that the SDP hierarchy~\eqref{eq:sdp_odd}-\eqref{eq:sdp_even} is an upper bound on the algebraic program~\eqref{prog:ncbibipartitealg}, and with that also on the quantum program~\eqref{eq:program_quantum}.

\begin{proposition}\label{prop:alternative_converse}
For any $n\in\mathbb{N}$ we have that,\footnote{We will see in Appendix~\ref{eq:channel_quantum} that even $\bar{p}[A,\cF]=p^*[A,\cF]$.}
\begin{align}\label{eq:star_bar}
\sdp_n[A,\cF]\geq\bar{p}[A,\cF]\geq p^*[A,\cF]\,.
\end{align}
\end{proposition}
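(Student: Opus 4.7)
The plan is to verify the two inequalities separately.

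The inequality $\bar{p}[A,\cF]\geq p^*[A,\cF]$ is a tautology given the construction preceding the proposition: starting from any quantum-feasible tuple $(\cH,\ket{\psi},E_\alpha,D_\beta)$ for $p^*[A,\cF]$, take $\mathcal{D}$ to be the norm closure of the algebra generated by the $D_\beta$ in $\cB(\cH)$ and set $\sigma(D)\eqdef\Scp{\psi}{D\psi}$ and $\rho_\alpha(D)\eqdef\Scp{\psi}{E_\alpha D\psi}$. The commutation relations $[E_\alpha,D_\beta]=0$ imply that $\rho_\alpha$ is hermitian, and that any operator inequality $g(E_1,\dots,E_N)\succeq 0$ translates into the functional inequality $g(\rho_1,\dots,\rho_N,\sigma)\succeq 0$, since for any positive $D\in\mathcal{D}$ one has $\bra{\psi}g(E_1,\ldots,E_N)D\ket{\psi}=\bra{\psi}D^{1/2}g(E_1,\ldots,E_N)D^{1/2}\ket{\psi}\geq 0$. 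The objective values agree because $\rho_\alpha(D_\beta)=\bra{\psi}E_\alpha D_\beta\ket{\psi}$.

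For the main inequality $\sdp_n[A,\cF]\geq\bar{p}[A,\cF]$, I would lift the algebraic data to vector data on a Hilbert space and then adapt the argument from Section~\ref{sec:new_hierarchy}. Given an algebraic-feasible tuple $(\mathcal{D},\rho_\alpha,D_\beta,\sigma)$, apply the GNS construction for $\sigma$ to produce a Hilbert space $\cH_\sigma$, a $*$-representation $\pi:\mathcal{D}\to\cB(\cH_\sigma)$, and a cyclic vector $\xi$ with $\sigma(D)=\Scp{\xi}{\pi(D)\xi}$. The boundedness assumption forces $-C\sigma\preceq\rho_\alpha\preceq C\sigma$ as hermitian functionals, so by the non-commutative Radon–Nikodym theorem there exist self-adjoint $T_\alpha$ in the commutant $\pi(\mathcal{D})'$ with $\|T_\alpha\|\leq C$ and $\rho_\alpha(D)=\Scp{\xi}{T_\alpha\pi(D)\xi}$ for all $D\in\mathcal{D}$. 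Cyclicity of $\xi$ together with $[T_\alpha,\pi(D)]=0$ then lifts each functional inequality $g(\rho_1,\dots,\rho_N,\sigma)\succeq 0$ to the operator inequality $g(T_1,\dots,T_N)\succeq 0$ on $\cH_\sigma$: for any $D\in\mathcal{D}$ self-adjoint, $\bra{\pi(D)\xi}g(T)\ket{\pi(D)\xi}=\bra{\xi}g(T)\pi(D^2)\xi\rangle\geq 0$, and by density this gives $g(T)\succeq 0$.

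With this representation in place, define
\begin{align*}
\Omega^n_{u,v}\eqdef\Scp{\xi}{X_{u^*}X_v\,\xi}\,,
\end{align*}
where $X_w$ interprets each $z$-symbol as $T_\alpha$ and each $y$-symbol as $\pi(D_\beta)$. Then $\Omega^n_{(\alpha),(\beta)}=\rho_\alpha(D_\beta)$, so the objective values match; $\Omega^n_{\emptyset,\emptyset}=\sigma(\idty)=1$; and $\Omega^n\succeq 0$ is the Gram matrix of the vectors $\{X_v\xi\}_v$. It remains to verify the localizing constraints $\Omega^n[f,\hat f]\succeq 0$. For this I would mimic~\eqref{eq:semidefinite_1}--\eqref{eq:semidefinite_3}, but replace the trace-cyclic manipulation $\tr[\,\cdot\,\sigma^{1/2}\,\cdot\,\sigma^{1/2}]$ used there by the inner product on $\cH_\sigma$. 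Concretely, use the commutativity $[T_\alpha,\pi(D_\beta)]=0$ to write any word as a product $T_{w_z}\pi(D_{w_y})$, then factor the relevant bilinear form into $\Scp{\xi}{B^*\,g_1(T)\,g_2(\pi(D))\,B'\xi}$-type expressions. Invoking the just-established operator positivity of the constraint polynomials, take square roots and regroup so as to write the quadratic form as $\|C\xi\|^2\geq 0$ for a suitable operator $C$ built from the $T_\alpha$'s, $\pi(D_\beta)$'s, and $f(\cdot)^{1/2}$, $\hat f(\cdot)^{1/2}$.

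The main obstacle is precisely this last step: the finite-dimensional proof in Section~\ref{sec:new_hierarchy} used trace cyclicity together with the explicit square root $\sigma^{1/2}$, neither of which is directly available in the general algebraic setting. The substitute is to use the cyclic vector $\xi$ as a stand-in for $\sigma^{1/2}$ and to exploit the bipartite commutation $[T_\alpha,\pi(D_\beta)]=0$ to separate $z$- and $y$-degrees of freedom within every word; care is needed for mixed constraints with $f\in\cG$ and $\hat f\in\cK$, but each case reduces (after separating variables) to a Gram-matrix positivity statement on $\cH_\sigma$.
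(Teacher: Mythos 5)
The second inequality $\bar{p}[A,\cF]\geq p^*[A,\cF]$ is handled the same way as in the paper, and that part is fine. Your plan for the first inequality also starts exactly where the paper does: GNS for $\sigma$ and a non-commutative Radon--Nikodym argument producing bounded self-adjoint $T_\alpha\in\pi(\mathcal{D})'$. The gap is in the crucial step you flag as an ``obstacle'': you then set
\[
\Omega^n_{u,v}\eqdef\Scp{\xi}{X_{u^*}X_v\,\xi},
\]
which is a bona fide NPA-style moment matrix (Gram matrix of the vectors $X_v\xi$), but it does \emph{not} in general satisfy the localizing constraints~\eqref{eq:sdp_odd}--\eqref{eq:sdp_even} that distinguish the new hierarchy from NPA. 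Commutativity of $T_\alpha$ with $\pi(D_\beta)$ cannot rescue this, because the failure already occurs among the $T_\alpha$'s alone (which need not commute with one another). Concretely, take $T_1=\operatorname{diag}(a,b)$ with $0<a<b$ and $T_2=\ket{v}\bra{v}$ for $v=\tfrac1{\sqrt2}(1,1)$, both bounded and PSD, and $\xi$ close to $e_1$. A direct computation gives
\[
\Scp{\xi}{T_1^2 T_2^2\,\xi}-\Scp{\xi}{(T_1T_2)^2\xi}\;\approx\;a\cos^2\theta\sin^2\theta\,(a-b)\;<\;0
\]
(with $\theta=\pi/4$), which means the would-be $\Omega^n[1,1]$ block of your $\Omega^n$ has a direction of negative value. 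So the ``Gram-matrix positivity on $\cH_\sigma$'' you hope to reduce to is simply false for the definition above.

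What is missing is Woronowicz's purification theorem~\cite{Woronowicz72}: it supplies an anti-unitary modular conjugation $J$ with $J^2=\1$, a vector $\psi$ with $J\psi=\psi$ reproducing the same state on $\cM=\pi(\mathcal{D})''$, the property $J\cM J=\cM'$, and crucially $\Scp{\psi}{YJY\psi}\geq 0$ for all $Y\in\cM$. The paper then sets $E_\alpha\eqdef Jh_\alpha J\in\cM$ (rather than keeping the Radon--Nikodym derivative in the commutant), and defines the bilinear form
\[
\omega(u^*,v)\eqdef\Scp{\psi}{Z_v\,J\,Z_u\,\psi}
\]
rather than $\Scp{\xi}{X_{u^*}X_v\xi}$. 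The positivity of $\Omega$, of $\Omega[1,1]$, and of $\Omega[f,\hat f]$ then follow from anti-linearity of $J$, the inequality $\Scp{\psi}{YJY\psi}\geq 0$, and the fact that $JZJ$ lies in $\cM'$ and hence commutes with all of $\cM$. In other words, $J$ plays exactly the role that $\sigma^{1/2}$ together with trace cyclicity plays in the finite-dimensional argument~\eqref{eq:semidefinite_1}--\eqref{eq:semidefinite_3}; the cyclic vector $\xi$ alone cannot substitute for it, and neither can separating $z$- and $y$-variables.
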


\begin{proof}
The second inequality follows from the discussion above and we now prove the first inequality.

Let \(\rho_\alpha,\sigma\) be the set of functionals associated to the optimal solution \(\bar{p}[A,\cF]\). A standard GNS construction for the state \(\sigma\) gives rise to a Hilbert space \(\cH\), a dense mapping \(i:\mathcal{D} \to \cH\), a vector \(\xi = i(\1)\)  and a representation \(\pi:\mathcal{D} \to \cB(\cH)\) defined by \(\pi(x)i(a) = i(xa)\), such that
\begin{align}
\Scp{i(a)}{i(b)} = \sigma(a^* b)\quad\text{and}\quad\sigma(x) = \Scp{\xi}{\pi(x) \xi} \,.
\end{align}
For the sake of convenience, we identify \(D_\beta\) with \(\pi(D_\beta)\). By the von Neumann commutant theorem, the double commutant \(\pi(\mathcal{D})''\) of \(\pi(\mathcal{D})\) is a von Neumann algebra, denoted by \(\cM\), and the vector \(\xi\) defines a normal state on \(\cM\). Now, by~\cite[Theorem 2.2]{Woronowicz72}, there exists an anti-unitary operator \(J :\cH \to \cH\), satisfying \(J^2 = \1\), another vector \(\psi \in \cH\) (differing from \(\xi\) by at most a phase), such that for all \(Y \in \cM\)
\begin{align}
\Scp{\xi}{Y \xi}= \Scp{\psi}{Y \psi}\,,\quad J \psi = \psi\,,\quad\text{and}\quad\Scp{\psi}{Y J Y \psi} \geq 0\,.
\end{align}
Moreover, we have that \(J \cM J = \cM'\), meaning that for any operator \(X\) in the commutant of \(\cM\) there exists an element \(Y \in \cM\) such that \( J Y J = X\). By the non-commutative Radon-Nikodym derivative argument, see, e.g., \cite{Takesaki02_2}, setting
\begin{align}
h_\alpha:\cH \to \cH\,,\quad\Scp{i(a)}{h_\alpha i(b)} = \rho_\alpha(a^* b)
\end{align}
defines an operator which is positive and bounded, since
\begin{align}
0 \preceq\Scp{i(a)}{h_\alpha i(a)}=\rho_\alpha(a^* a)\preceq C \sigma(a^* a) = C\Scp{i(a)}{i(a)}\,.
\end{align}
A standard calculation also gives that \(h_\alpha \in \cM'\). Moreover, for any linear constraint \(g(\rho_1,\dots,\rho_N,\sigma) \succeq 0\) we have
\begin{align}
\Scp{i(a)}{g(h_1,\dots,h_N,\1) i(a)} &= g(\Scp{i(a)}{h_1 i(a)},\dots,\Scp{i(a)}{h_N i(a)},\Scp{i(a)}{i(a)}) \\
&= g(\rho_1,\dots,\rho_N,\sigma)(a^* a) \geq 0
\end{align}
and hence \(g(h_1,\dots,h_N,\1)\) defines a positive operator. By the previous assertions, we have that \(E_\alpha = J h_\alpha J\) is an element of \(\cM\) and likewise \(g(E_1,\dots,E_N,\1) \succeq 0\). 
  
We have all necessary ingredients at hand to define the analogue of the bilinear form $\omega:\Sigma_\infty\times\Sigma_\infty\to\mathbb{C}$ from~\eqref{eq:bilinear_form}. First, let us abbreviate for \(\gamma\in \{1,\dots,N+M\}\)
\begin{align}
Z_\gamma\eqdef\left\{
\begin{array}{lr}
X_\gamma:& k \in \{1,\dots,N\}\\
Y_{\gamma-N}:& k \in \{N+1,\dots N+M\} 
\end{array}
\right.\,,
\end{align}
and for any word \(u = (u_1,\dots,u_\ell)\), \(u_i \in \{1,\dots,N+M\}\),
\begin{align}
Z_u\eqdef Z_{u_1} \dots Z_{u_\ell} \,.
\end{align}
For any two words \(u,v\) we set
\begin{align}
\omega(u^*,v)\eqdef \Scp{\psi}{Z_v J Z_u \psi}\,,
\end{align}
and this also defines the matrix $\Omega$ as in~\eqref{eq:new_matrix}. This matrix is positive semidefinite by
\begin{align}
\sum_{u,v} \overline{\lambda_u} \lambda_v\, \omega(u^*,v) &= \sum_{u,v} \overline{\lambda_u} \lambda_v \Scp{\psi}{Z_v J Z_u \psi}= \Scp{\psi}{\sum_v \lambda_v Z_v J \sum_u \lambda_u Z_u \psi} \geq 0\,,
\end{align} 
where it is essential that \(J\) is an anti-unitary operator. Moreover, property~\eqref{eq:11_condition} is checked by
\begin{align}
&\sum_{s,u,t,v} \overline{c_{su}} c_{tv} \,\Omega_{s^* t,u^* v} = \sum_{s,u,t,v} \overline{c_{su}} c_{tv} \,\omega((s^*t)^*,u^* v)\notag\\
&\quad=\sum_{s,u,t,v} \overline{c_{su}} c_{tv} \Scp{\psi}{Z_u^* Z_v J Z_s^* Z_t \psi} = \sum_{s,u,t,v} \overline{c_{su}} c_{tv} \Scp{Z_u \psi}{J J Z_v J Z_s^* Z_t \psi}\notag\\
&\quad=\sum_{s,u,t,v} \overline{c_{su}} c_{tv} \Scp{Z_u \psi}{J Z_s^* Z_t J Z_v \psi} = \sum_{s,u,t,v} \overline{c_{su}} c_{tv} \Scp{Z_t J Z_v \psi}{Z_s J Z_u \psi} \geq 0\,,
\end{align}
which defines a positive matrix. The linear constrained assertion~\eqref{eq:new_matrix_constraint} follows in a similar way. From the previous definitions, we have that
\begin{align}
\rho_\alpha(D_\beta) = \Scp{\psi}{h_\alpha D_\beta \psi} = \Scp{\psi}{J E_\alpha J D_\beta \psi} = \Scp{\psi}{D_\beta J E_\alpha \psi} = \Omega_{(\alpha),(\beta)}\,,
\end{align}
and hence the (infinite-dimensional) matrix $\Omega$ fulfills the constraints given by any finite level $n$ as in~\eqref{eq:sdp_odd}.
\end{proof}


\subsection{Asymptotic Convergence}\label{app:convergence}

Here we show that the hierarchy~\eqref{eq:sdp_odd}-\eqref{eq:sdp_even} asymptotically converges to the quantum value~\eqref{eq:program_quantum}. The argument follows previous works~\cite{Doherty:2008tm,PNA10,NPA08}.

\begin{theorem}\label{prop:GNSforspforms}
  Let \(\sdp_n[A,\cF]\) denote the SDP hierarchy~\eqref{eq:sdp_odd}-\eqref{eq:sdp_even} of the quantum bilinear program~\eqref{eq:program_quantum}, and assume~\ref{assmp:boundedop}. Then, we have the following:
\begin{enumerate}
\item In the limit of \(n \to \infty\) the optimal solutions of the programs \(\sdp_n[A,\cF]\) converge to a finite value,
\begin{align}
\lim_{n\to\infty}\sdp_n[A,\cF]=\hat{p}[A,\cF]\,.
\end{align}
\item There exists a Hilbert space \(\cH\), a normalized vector \(\xi \in \cH\), a *-homomorphism \(\pi: \Sigma_\infty \to \cB(\cH)\) as well as a linear and positive mapping \(\vphi :\Sigma_\infty \to \cB(\cH)\) with commuting ranges (that is, \([\vphi(a),\pi(b)] = 0\) for all \(a,b \in \Sigma_\infty\)) as well as elements \(z_\alpha,y_\beta\in\Sigma_\infty\) such that
\begin{align}\label{eq:p_hat}
\hat{p}[A,\cF] = \sum_{\alpha,\beta} A_{\alpha,\beta} \Scp{\xi}{\vphi(z_\alpha) \pi(y_\beta) \xi}\,.
\end{align}
Moreover, the constraints given by the linear functions \(g\in\cG\) and $k\in\cK$ are all satisfied,
\begin{align}\label{eq:constraints_rep}
g\big(\vphi(z_1),\dots,\vphi(z_N)\big)\succeq 0,\quad\text{as well as}\quad k\big(\pi(y_1),\dots,\pi(y_M)\big)\succeq 0\,.
\end{align}
\end{enumerate}
\end{theorem}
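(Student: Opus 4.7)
The overall strategy is in two stages: first extract, by a compactness argument, a ``limiting moment matrix'' from the monotone SDP hierarchy, and then perform a two‑sided GNS‑style construction on this limiting object. The two sides of the construction reflect the fact that our bilinear form $\omega(u,v)=\tr[X_u\sigma^{1/2}X_v\sigma^{1/2}]$ from~\eqref{eq:bilinear_form} is the Hilbert--Schmidt inner product of vectors of the form $X_u\sigma^{1/2}X_{s^*}$, and the two commuting representations arise from left multiplication on each side.

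For (1), monotonicity $\sdp_{n+1}\le\sdp_n$ was already observed, and Assumption~\ref{assmp:boundedop} together with the localizing constraints $\Omega^n[1,1]\succeq0$, $\Omega^n[f]\succeq0$ forces every entry $\Omega^n_{u,v}$ with $\ell(u),\ell(v)\le n$ to lie in a fixed compact set depending only on $C$ and on $\ell(u)+\ell(v)$. Hence the sequence of objective values is bounded and converges to a finite limit $\hat p[A,\cF]$. Choosing near‑optimal feasible $\Omega^n$ and applying Tychonoff/Cantor diagonalization entrywise yields a subsequential limit $\omega:\Sigma_\infty\times\Sigma_\infty\to\C$ that satisfies, at every finite level, all the constraints in~\eqref{eq:sdp_odd}--\eqref{eq:sdp_even}, with $\omega(\emptyset,\emptyset)=1$ and $\sum_{\alpha,\beta}A_{\alpha,\beta}\,\omega((\alpha),(\beta))=\hat p[A,\cF]$.

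For (2), I would build $\cH$ on the vector space $\Sigma_\infty\otimes\Sigma_\infty$ with sesquilinear form
\begin{align*}
\bigl\langle a\otimes b,\; c\otimes d\bigr\rangle \;:=\; \omega\!\bigl(a^*\circ c,\; b^*\circ d\bigr),
\end{align*}
which is positive semidefinite: this is exactly what the limiting constraint $\Omega[1,1]\succeq0$ from~\eqref{eq:11_condition} encodes (cf.\ the chain~\eqref{eq:semidefinite_1}--\eqref{eq:semidefinite_3}). Quotient by the null subspace and complete to obtain $\cH$, with cyclic vector $\xi\eqdef[\emptyset\otimes\emptyset]$, normalized by $\omega(\emptyset,\emptyset)=1$. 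Define the two actions on the dense subspace of classes $[a\otimes b]$ by
\begin{align*}
\pi(y_\beta)[a\otimes b]\;\eqdef\;[a\otimes (y_\beta\circ b)], \qquad \vphi(z_\alpha)[a\otimes b]\;\eqdef\;[(z_\alpha\circ a)\otimes b].
\end{align*}
Commutativity of the ranges, $[\vphi(z_\alpha),\pi(y_\beta)]=0$, is immediate from acting on disjoint factors; $\pi$ is a $*$-homomorphism with $\pi(y)^*=\pi(y^*)$ by a direct check on inner products; $\vphi$ is linear, and it is positive because the general localizing matrix condition $\Omega[f,1]\succeq0$ applied to a positive combination $g$ of the $z_\alpha$ translates into $\langle v,\vphi(g(z))v\rangle\ge0$ for every $v$ in the dense subspace. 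Then a direct computation gives $\langle\xi,\vphi(z_\alpha)\pi(y_\beta)\xi\rangle=\omega((\alpha),(\beta))$, recovering~\eqref{eq:p_hat}. The operator inequalities in~\eqref{eq:constraints_rep} reduce, on the dense subspace, to $\Omega[g,1]\succeq0$ (for $g\in\cG$) and $\Omega[1,k]\succeq0$ (for $k\in\cK$), and extend to $\cH$ by continuity.

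The delicate step, and the place where Assumption~\ref{assmp:boundedop} is used essentially, is showing that $\pi(y_\beta)$ and $\vphi(z_\alpha)$ extend to \emph{bounded} operators on the completion $\cH$. The argument is the standard Archimedean one sketched in the footnote after~\eqref{eq:convergence_NPA}: the boundedness constraints $C\idty\pm E_\alpha\succeq0$ and $C\idty\pm D_\beta\succeq0$ belong to $\cF$, and the corresponding localizing matrices $\Omega[C\pm y_\beta,1]\succeq0$ and $\Omega[1,C\pm z_\alpha]\succeq0$ (together with their iterates $\Omega[f,\hat f]\succeq0$ needed to bound $\|\pi(y)[a\otimes b]\|^2=\langle[a\otimes b],\pi(y^*y)[a\otimes b]\rangle$) yield operator norm bounds of the form $\|\pi(y_\beta)\|\le C$ and $\|\vphi(z_\alpha)\|\le C$. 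This is the only nontrivial ingredient beyond the algebraic manipulations above, and it is exactly what forces us to feed \emph{pairs} of localizing constraints $\Omega[f,\hat f]$ into the construction rather than the single localizing constraints of the NPA hierarchy.
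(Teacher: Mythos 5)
Your two‑sided GNS strategy is the same one the paper pursues, but there is a genuine error in the definition of the sesquilinear form that would make the construction fail as written. You set \(\langle a\otimes b, c\otimes d\rangle := \omega(a^*\circ c, b^*\circ d)\) and claim this is positive semidefinite by \(\Omega[1,1]\succeq 0\). But the entry of \(\Omega[1,1]\) at row \((a,b)\), column \((c,d)\) is \(\Omega_{a^*\circ c,\,b^*\circ d} = \omega((a^*c)^*, b^*d) = \omega(c^*\circ a,\,b^*\circ d)\); what you wrote is instead \(\omega(a^*\circ c, b^*\circ d) = \Omega_{c^*a,\,b^*d}\), which is the entry of \(\Omega[1,1]\) at the \emph{swapped} position \((c,b),(a,d)\). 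In other words your form is the partial transpose (on the first tensor factor) of the correct PSD matrix, and partial transposes of PSD matrices are not PSD in general. This is not just a formal issue: with \(N+M=3\) Hermitian generators, \(\sigma = \idty/2\) on \(\C^2\), and \(X_1,X_2,X_3\) the Pauli matrices, your form restricted to \(\{1,2,3\}\times\{1,2,3\}\) equals \(M_{(a,b),(c,d)} = \delta_{ac}\delta_{bd} - \delta_{ab}\delta_{cd} + \delta_{ad}\delta_{cb}\), i.e.\ \(M = 2P_{\mathrm{sym}} - \ket{\Phi}\bra{\Phi}\) with \(\ket{\Phi}=\sum_a\ket{aa}\), which has eigenvalue \(-1\) on \(\ket{\Phi}\). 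The fix is simply to define \(\langle a\otimes b, c\otimes d\rangle := \Omega_{a^*\circ c,\,b^*\circ d} = \omega(c^*\circ a,\,b^*\circ d)\); with this, your definitions of \(\pi,\vphi\) by left concatenation on each factor, the recovery of \(\Omega_{(\alpha),(\beta)}\) as \(\langle\xi,\vphi(z_\alpha)\pi(y_\beta)\xi\rangle\), and the passage from localizing matrices \(\Omega[g,1]\), \(\Omega[1,k]\) to the operator constraints~\eqref{eq:constraints_rep} all go through (you also have the roles of \(f\) and \(\hat f\) swapped in your boundedness step: the first slot of \(\Omega[f,\hat f]\) corresponds to the factor on which \(\vphi\) acts, so it should be \(\Omega[C\pm z_\alpha,1]\) and \(\Omega[1,C\pm y_\beta]\)).

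On the route itself, the paper does take a somewhat different path than yours in two places. For part (1), instead of a Tychonoff/diagonalization extraction of a limiting functional (which, as you note but do not prove, requires uniform bounds on \emph{all} entries \(\Omega^n_{u,v}\), a consequence of Archimedeanity of the quadratic module), the paper introduces the dual quantity \(p'[A,\cF]\), shows \(p'\ge\hat p\) directly, and uses the Helton--McCullough Positivstellensatz plus a duality argument to identify \(p'=\hat p\) as a supremum over constrained linear functionals. For part (2), instead of carrying out the GNS construction directly on the free $*$-algebra \(\Sigma_\infty\otimes\Sigma_\infty\) and then proving boundedness by hand via the localizing matrices, the paper first completes \(\Sigma_\infty\) to its universal \(C^*\)-algebra \(\cA\) (Assumption~\ref{assmp:boundedop} guarantees the sup‑over‑representations norm is finite on the generators), forms the maximal \(C^*\)-tensor product \(\bar\cA\otimes\cA\), and does GNS for the state \(s(\bar a\otimes b)=\omega(a^*,b)\); boundedness of all the operators is then automatic, and the commuting action \(\vphi\) is obtained from the modular anti‑unitary \(J\) à la Woronowicz. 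Both routes are viable; the paper's construction outsources the Archimedean estimate to the existence of \(\cA\) whereas yours must carry it through the operator estimates, so you should make that step explicit.
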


Since the quantum bilinear program~\eqref{eq:program_quantum} is a maximization over all all expressions as on the right-hand side of~\eqref{eq:p_hat} under the constraints~\eqref{eq:constraints_rep}, it immediately follows that
\begin{align}
p^*[A,\cF]\geq\hat{p}[A,\cF]\,.
\end{align}
Now because inequality in the other direction was already established in~\eqref{eq:star_bar}, we conclude that the hierarchy~\eqref{eq:sdp_odd}-\eqref{eq:sdp_even} asymptotically converges to the quantum value~\eqref{eq:program_quantum},
\begin{align}
p^*[A,\cF]=\lim_{n\to\infty}\sdp_n[A,\cF]\,.
\end{align}
Furthermore, the optimal value $\bar{p}[A,\cF]$ of the algebraic optimization~\eqref{prog:ncbibipartitealg} also becomes equal to the quantum value
\begin{align}
p^*[A,\cF]=\bar{p}[A,\cF]\,,
\end{align}
again by~\eqref{eq:star_bar}.
 
\begin{proof}[Proof of Theorem~\ref{app:convergence}]
  We first note that due to Assumption~\ref{assmp:boundedop}, the positivity constraints provide a bound on the diagonal elements of the \(d(1) \times d(1)\) sub-matrix \(\Omega^n_{(\alpha),(\beta)}\),
\begin{align}
C^2-\Omega_{(\alpha),(\alpha)}^n\geq0\,,
\end{align}
and thus on its trace. Hence, we find that
\begin{align}
0 \leq\left|\sum_{\alpha,\beta}A_{\alpha,\beta}\Omega^n_{(\alpha),(\beta)}\right|\leq\norm{A}d(1)C^2\,.
\end{align}
Moreover, we have \(\sdp_n[A,\cF] \leq \sdp_m[A,\cF]\) for \(n \leq m\). Thus, the sequence \(\sdp_n[A,\cF]\) is monotonically decreasing and lower bounded by zero, hence converging to a finite value \(\hat{p}[A,\cF]\).

In order to proceed, we need another expression for the limiting point \(\hat{p}[A,\cF]\). More precisely, we have to examine in which way the limiting point can be seen as being specified by an infinite-dimensional matrix, capturing the constraints on all words of all possible lengths at once. For any $n\in\mathbb{N}$ we have the subspace \(\Sigma_n = \{\,a \in \Sigma_\infty \,|\,a = \sum_{w:l(w)\leq n} c_w \,w\,\}\). Furthermore, for $n$ odd we define the two families of cones
\begin{align}
\text{sym}(\Sigma_n) &\eqdef\left\{x\in \Sigma_n \otimes \Sigma_n\,\big|\,x=\sum_i \lambda_i  a_i^* \otimes a_i\,,\;a_i\in \Sigma_n\,,\lambda_i \geq 0\right\} \\
(\Sigma_n \otimes \Sigma_n)_{+} &\eqdef\left\{x \in \Sigma_n \otimes \Sigma_n\,\big|\,x=\sum_{\stackrel{f,\hat{f}\in\cF}{k,l}} \,a^*_l f a_k \otimes  b_k^* \hat{f} b_l\,,\;\;a_k,b_k \in \Sigma_{(n-1)/2}\right\}\,.
\end{align}
Let \(\Omega^n\) be a feasible point of the $n$-th level of the SDP hierarchy~\eqref{eq:sdp_odd}. By mapping a pair of words \(u,v \in \Sigma_n\) to \(\Omega^n_{u,v}\), we specify a linear functional \(\omega\) on \(\Sigma_n \otimes \Sigma_n\), and it is easily seen that the constraints on \(\Omega^n\) imply that
\begin{align}
\omega(\text{sym}(\Sigma_n) \cup (\Sigma_n \otimes \Sigma_n)_{+} ) \geq 0\,,\qquad \omega(\1) = 1 \,. 
\end{align}
For the value
\begin{align}
p'[A,\cF]\eqdef\inf\Big\{q:\,\exists\,n\,\,\text{with}\,\,q\1-\sum_{\alpha,\beta}A_{\alpha,\beta} z_\alpha\otimes y_\beta\in\text{sym}(\Sigma_n)\cup(\Sigma_n\otimes\Sigma_n)_{+}\Big\}\,,
\end{align}
we find that for a finite \(\eps > 0\) there exists an \(n\in\mathbb{N}\) such that
\begin{align}
p'[A,\cF]+\eps\geq\sum_{\alpha,\beta} A_{\alpha,\beta} \Omega^n_{(\alpha),(\beta)}\,.
\end{align}
Hence, we have \(p'[A,\cF]\geq\hat{p}[A,\cF]\). But by exploiting the Positivstellensatz of Helton and McCullough~\cite{HMcC04}, a duality argument shows (see, e.g., \cite{Doherty:2008tm}), 
\begin{align}
p'[A,\cF]= \sup\Big\{\big|\sum_{\alpha,\beta} A_{\alpha,\beta} \omega(z_\alpha \otimes y_\beta) \big|\;:\;\omega\big(\text{sym}(\Sigma_\infty) \cup (\Sigma_\infty \otimes \Sigma_\infty)_{+}\big)\geq 0,\,\omega(\1)=1\Big\}\,,
\end{align}
which then implies \(p'[A,\cF]=\hat{p}[A,\cF]\). In the following, we show how to construct a Hilbert space and associated representations, starting from \(\omega\).

As usual, the argument is based on a GNS construction, and closely follows the ideas of Woronowicz in his study of purifications for states on \(C^*\)-algebras,~\cite{Woronowicz73}. We first turn the free algebra \(\Sigma_\infty\) into a \(C^*\)-algebra, that is a norm-closed algebra such that we have \(\norm{x^* x} = \norm{x}^2\). This is achieved by defining for \(x \in \Sigma_\infty\)
\begin{align}
\norm{x} = \sup\big\{\,\norm{\pi(x)}_{\cB(\cH_{\pi})}\,:\,\pi:\Sigma_\infty \to \cB(\cH_\pi) \,\text{  a *-representation}\big\}\,.
\end{align}
Here, a *-representation is a algebraic homomorphism of \(\Sigma_\infty\) into the bounded operators on some Hilbert space \(\cH\) such that the *-involution is mapped to the usual involution. It is easily checked that this norm satisfies our requirement, and thus the topological closure of \(\Sigma_\infty\) under this norm is a \(C^*\)-algebra, which we denote by \(\cA\). For all $x_i\in\Sigma_\infty$ the Assumption~\ref{assmp:boundedop} implies  
\begin{align}
\exists C :\; C \1 \succeq x_i^2\,,
\end{align}
ensures that \(\norm{x_i} \leq \sqrt{C}\) and hence \(x_i \in \cA\) since by definition of positivity in \(\cA\) there exists \(w_i \in \cA\) with \(x_i^* x_i + w_i^* w_i = C \1\) and we have for any \(\pi:\cA \to \cB(\cH)\) and any \(\psi \in \cH, \norm{\psi} = 1\)
\begin{align}
\Scp{\psi}{\pi(x_i^2) \psi} \leq \Scp{\pi(x_i)\psi}{\pi(x_i)} + \Scp{\pi(w_i)\psi}{\pi(w_i)} = C\,.
\end{align}
We also define the opposite \(C^*\)-algebra \(\bar{\cA}\), which is as topological space equal to \(\cA\) equipped with the multiplication rule \(a \cdot b = b a\) for \(a,b \in \cA\). Following~\cite{Woronowicz72}, we denote \(a^*\) as seen as an element of \(\bar{\cA}\) by \(\bar{a}\). Then the mapping \( a \mapsto \bar{a}\) is a *-invariant, anti-linear multiplicative isometry from \(\cA\) to \(\bar{\cA}\). 

Let \(\bar{\cA} \otimes \cA\) be the maximal \(C^*\)-tensor product of \(\bar{\cA}\) and \(\cA\), see for example~\cite{Pis03} for a precise definition. On this algebra, we can define another *-invariant, anti-linear and multiplicative mapping \(j:\bar{\cA} \otimes \cA \to \bar{\cA} \otimes \cA\) satisfying \(j^2 = \id\) by setting
\begin{align}
j(\bar{a} \otimes b) = \bar{b} \otimes a \,.
\end{align}
We define a state \(s\) on \(\bar{\cA} \otimes \cA\) by setting
\begin{align}
s(\bar{a} \otimes b) = \omega(a^*,b) \,,
\end{align}
for words of finite length \(a,b\) and then extending to the closure. Normalization is immediate and positivity follows from property \(\omega((\Sigma_n \otimes \Sigma_n)_{+})\geq 0\),
\begin{align}
s((\bar{a} \otimes b)^* \bar{a} \otimes b) = s(\overline{a^* a} \otimes b^* b) = \omega(a^* a, b^* b)
\end{align}
since \(\bar{x} \cdot \bar{y} = \overline{xy}\).

Carrying out the standard GNS construction for the state \(s\) gives rise to a Hilbert space \(\cH\), a dense mapping \(i:\bar{\cA} \otimes \cA \to \cH\), a vector \(\xi = i(\1)\)  and a representation \(\pi:\bar{\cA} \otimes \cA \to \cB(\cH)\) defined by \(\pi(\bar{a} \otimes b)i(\bar{c} \otimes d) = i(\overline{ac} \otimes bd)\), such that
\begin{align}
\Scp{i(\bar{a} \otimes b)}{i(\bar{c} \otimes d)}&= \omega(c^* a, b^* d)\\
s(\bar{a} \otimes b)&= \Scp{\xi}{\pi(\bar{a} \otimes b) \xi} \,.
\end{align}
We now define an anti-linear operator \(J\) by defining it on the dense domain \(i(\bar{\cA} \otimes \cA)\) as
\begin{align}
\Scp{i(\bar{a} \otimes b)}{J i(\bar{c} \otimes d)} = \Scp{i(\bar{a} \otimes b)}{i(\bar{d} \otimes c)} \,.
\end{align}
Its adjoint equals itself, since \(\omega(a^*,b) = \overline{\omega(b^*,a)}\) due to positivity and
\begin{align}
\Scp{Ji(\bar{a} \otimes b)}{i(\bar{c} \otimes d)}=\Scp{i(\bar{b} \otimes a)}{i(\bar{c} \otimes d)}=\omega(c^* b, a^* d) = \overline{\omega(d^* c,b^* c)}=\overline{\Scp{i(\bar{a} \otimes b)}{J i(\bar{c} \otimes d)}} \,.
\end{align}
Moreover, we find that \(J^2 = \1\) and hence \(J\) can be extended to an anti-unitary involution on \(\cH\). Furthermore, we have
\begin{align}
J \pi(\bar{a} \otimes b) J i(\bar{c} \otimes d) = i(\overline{b c} \otimes a d) 
= \pi(\bar{b} \otimes a)i(\bar{c} \otimes d) = \pi(j(\bar{a} \otimes b)) i(\bar{c} \otimes d)
\end{align}
and hence \(J \pi(\bar{a} \otimes b) J = \pi(j(\bar{a} \otimes b))\). A similar calculation gives
\begin{align}
\pi(\bar{a} \otimes b) = J \pi(j(\bar{a} \otimes \1)) J \pi(\bar{\1} \otimes b)  = \pi(\bar{\1} \otimes b) J \pi(j(\bar{a} \otimes \1)) J  \,. 
\end{align}
Hence the image of the linear mapping
\begin{align}
\vphi\,:\,a \mapsto \pi(\overline{a^*} \otimes \1)=J\pi(j(\overline{a^*} \otimes \1)) J
\end{align}
is contained in the commutant of \(\pi(\1 \otimes \cA)\). Moreover, any positive element \(a^* a \in \cA\) gets mapped to
\begin{align}
\vphi(a^* a) = \pi(\bar{a}^* \bar{a} \otimes \1)=\pi(\bar{a} \otimes \1)^* \pi(\bar{a} \otimes \1)\,,
\end{align}
which is a positive operator. This proves~\eqref{eq:p_hat}. The last assertion~\eqref{eq:constraints_rep} follows similarly. Considering a linear constraint \(k(y_1,\dots,y_M)\in\cK\), we find evaluating the diagonal matrix elements of \(\pi(\1 \otimes k(y_1,\dots,y_M))\) that
\begin{align}
\Scp{i(\bar{a} \otimes b)}{\pi(\1 \otimes k(y_1,\dots,y_M) i(\bar{a} \otimes b)} &= \Scp{i(\bar{a} \otimes b)}{i(\bar{a} \otimes k(y_1,\dots,y_M)b)} \\
&= \omega(a^* a,b^* k(y_1,\dots,y_M) b) \geq 0\,.
\end{align}
Hence  \(\pi(\1 \otimes k(y_1,\dots,y_M))\) is a positive operator. A similar derivation can be carried out for the map \(\vphi\).
\end{proof}


\section{Implications of Connes' embedding conjecture}\label{app:connes}

In this appendix, we discuss the implications of a positive answer to Connes' embedding conjecture~\cite{Ozawa:2013hn,Connes:1976ta} to our hierarchy. We first give a short sketch of an argument why a positive answer to Connes' embedding conjecture implies that the optimization in the program~\eqref{eq:program_quantum} can be restricted to finite-dimensional Hilbert spaces, though it does not imply that this supremum is also achieved. In the second part of this appendix, we sketch the argument for the case of the completely positive-semidefinite cone \(\mathcal{CS}_+\). As we do not want to go into the details about Connes' embedding conjecture, its different forms and its far reaching consequences (independent of the actual answer), we refer the interested reader to the extensive reviews of Ozawa on the topic~\cite{Ozawa_survey1,Ozawa:2013hn}. 


\subsection{General case} 

In Theorem~\ref{prop:GNSforspforms}, we found that the limiting point of our SDP hierarchy can be expressed as
\begin{align}
\hat{p}[A,\cF] = \sum_{\alpha,\beta} A_{\alpha,\beta} \Scp{\xi}{\pi^{op}(z_\alpha) \pi(y_\beta)\xi}\,,
\end{align}
where \(\pi\) is a representation of the universal enveloping algebra \(\cA\) of \(\Sigma_\infty \), and \(\pi^{op}\) is a representation of the opposite algebra \(\bar{\cA}\). Let \(\cN\) be the von Neumann algebra generated by \(\pi(\cA)\). Since we assume that Connes' embedding conjecture holds, all von Neumann algebras satisfy Kirchberg's QWEP property~\cite{Kirchberg:1993fc} which implies that \(\cN = \cB/\cJ\), where the \(C^*\)-algebra \(\cB\) has the WEP property, and \(\cJ\) is a two-sided ideal in \(\cB\). Since \(y_\beta\) are assumed to be hermitian elements, the Cayley transform \(U_\beta\) of \(\pi(y_\beta)\) is a unitary operator. Let \(\hat{\pi}: C^*[\mathbb{F}_M] \to \cN\) be the *-homomorphism defined by \(s_\alpha \mapsto U_\alpha\), where \(s_\alpha\) are the generators of the free group of \(M\) elements (\(C^*[\mathbb{F}_M]\) is the corresponding universal free group algebra). We apply the same procedure to get another *-homomorphism \(\hat{\pi^{op}}: C^*[\mathbb{F}_M]^{op} \to \pi^{op}(\bar{\cA})\). Now, \(C^*[\mathbb{F}_M]\) as a free group algebra satisfies the Lifting property~\cite{Ozawa_survey1}, and thus the mapping
\begin{align}
\hat{\pi^{op}} \otimes \hat{\pi} : C^*[\mathbb{F}_M]^{op} \otimes C^*[\mathbb{F}_M] \to \pi^{op}(\bar{\cA})\pi(\cA)
\end{align}
is continuous with respect to the minimal tensor product, see~\cite[Proposition 1.3 (iv)]{Kirchberg:1993fc}. Correspondingly, the state \(\omega\) defined by the vector \(\xi\) extends to a state \(\hat{\omega}\) on the minimal tensor product. As in the proof of~\cite[Theorem 28]{Ozawa:2013hn}, we can assume that the induced representation of \( C^*[\mathbb{F}_M]^{op}\) indeed reduces to the opposite representation of \(\hat{\pi}(C^*[\mathbb{F}_M])\) on \(\cH\). Now we know that \( C^*[\mathbb{F}_M]^{op} \otimes C^*[\mathbb{F}_M]\) acts on \(S_2(H) = \cH \otimes \bar{\cH}\) as \((\bar{s} \otimes s)(x) = sx\bar{s} \). Since the state \(\hat{\omega}\) can be approximated by a normal state~\cite{Kadison}, by inverting the Cayley transform we find that for any \(\eps >0\) there exists Hilbert-Schmidt operators \(x_i \in S_2(H)\) an hermitian elements \(\hat{z}_\alpha\), \(\hat{y}_\beta\) such that
\begin{align}
\left|\Scp{\xi}{\pi^{op}(z_\alpha) \pi(y_\beta) \xi} - \sum_i \lambda_i \tr\left[ x^*_i \hat{y}_\beta x_i \hat{z}_\alpha\right]\right| \leq \eps\,.
\end{align}
But since the state \(\omega\) originates from an maximization, we can assume that only one term (say given by \(x \in S_2(H)\) in the above sum is non-zero.It follows from \(\Scp{\xi}{\xi} = 1\) that \(\tr[x^* x] = 1\) and hence we can by an approximation argument assume that \(x\) is of finite rank, wit support projection \(p\). Projecting the hermitian elements \(\hat{z}_\alpha\), \(\hat{y}_\beta\) as well, the form
\begin{align}
z_\alpha \times y_\beta \to \tr\left[ x^* p \hat{y}_\beta p x p \hat{z}_\alpha p\right]
\end{align}
is seen to satisfy all the required constraints. In order to bring it into the form~\eqref{eq:introsigma}, we let \(\sigma = |xp|^2\) and find with \(x = u |x|\) the polar decomposition of \(x\) that \(\tr[\sigma] = \tr[|x|u^* u |x|] = \tr[x^* x p ] =1\) as well as
\begin{align}
\tr\left[ x^* p \hat{y}_\beta p x p \hat{z}_\alpha p\right] = \tr\left[ \sigma^{1/2} \hat{y}_\beta \sigma^{1/2} u\hat{z}_\alpha u^* \right] \,.
\end{align}

\subsection{Completely positive semidefinite cone}\label{app:connescsp}

Theorem~\ref{prop:GNSforspforms} also applies to this case, but we get also from the hierarchy that the constructed state fulfills in addition the cyclicity constraint. More precisely, let \(s\) be the state on \(\bar{\cA} \otimes \cA\) constructed in the proof of theorem~\ref{prop:GNSforspforms}. Note that in this setting, we do not distinguish two kinds of variables and hence \(\cA\) is the free C*-algebra generated by \(N\) positive elements \(z_\alpha\), for \(\alpha \in \{1,\ldots,N\}\).  The cyclicity constraints, which are added to each level also hold for the state \(s\), implying that we have 
\begin{align}
s(\overline{z_\alpha \circ u} \otimes v \circ z_\beta) = s(\overline{u \circ z_\beta} \otimes z\alpha \circ v) \,,
\end{align}
where \(u,v\) are arbitrary words in the variables \(z_\alpha\). Applying this identity recursively to the choice \(z_\beta = \idty\), we find for \(u = z_{\alpha_1} \,z_{\alpha_2}\,\cdots\,z_{\alpha_n}\)
\begin{align}
s(\overline{u} \otimes v ) = s(\overline{z_{\alpha_1} \,z_{\alpha_2}\,\cdots\,z_{\alpha_n}} \otimes v) = s(\overline{z_{\alpha_2}\cdots z_{\alpha_n} }\otimes z_{\alpha_1} \circ v = \ldots = s(\bar{\idty} \otimes u^* \circ v) \,.
\end{align}
Moreover, by the same trick we find
\begin{align}
s(\bar{\idty} \otimes u \circ z_\alpha) = s(\bar{z_\alpha} \otimes u) = s(\bar{\idty} \otimes z_\alpha \circ u)\,,
\end{align}
and hence \(s(\bar{\idty} \otimes u\circ v) = s(\bar{\idty} \otimes v \circ u)\). These equalities can be linear extended to hold for all finite polynomials \(u,v \in \cA\) in the variables \(z_\alpha\), which is a dense subset. They are hence true for all \(u,v \in \cA\). Since \(s\) is a state on \(\bar{\cA} \otimes \cA\), \(s(\bar{\idty} \otimes \cA)\) is a state \(\tau\) on \(\cA\), and the constraints just derive imply that it is a tracial state, \(\tau(ab) = \tau(ba)\) for \(a,b \in \cA\). This is also the state which is constructed by the NPA hierarchy, if we would follow the proof steps mentioned in the main text. 

It follows from these considerations that the limiting point \(p^{*}\) of our SDP~\eqref{eq:sdp_cspplus} can be written as
\begin{align}
p^* = \sum_{\alpha,\beta} A_{\alpha,\beta} \tau(z_\alpha\, z_\beta) \,.
\end{align}
Let \(\pi_\tau\) be the GNS representation of the state \(\tau\), and let \(\pi_\tau(\cA)''\) be the finite von Neumann algebra generated by it. If Connes' embedding conjecture holds, then \(\pi_\tau(\cA)''\) embeds into an ultrapower of the hyperfinite factor, preserving the tracial character of the state. Let \(\theta\) be this embedding. Then we have
\begin{align}
p^* = \sum_{\alpha,\beta} A_{\alpha,\beta} \tau\circ \theta^{-1}( \theta(z_\alpha)\, \theta(z_\beta))\,,
\end{align}
and Burgdorf, Laurent and Piovesan~\cite{BLP15} have shown that matrices of the form \( \tau\circ \theta^{-1}( \theta(z_\alpha)\, \theta(z_\beta))\) belong to the closure of the cone \(\mathcal{CS}_+\).


\section{Generalizations concerning constraint sets and objective functions}\label{sec:generalization}

In the main text we only considered linear inequality constraints on the non-commutative variables (expressed by the set $\cF$).\footnote{We think of the commutativity assumption not as of a constraint, but rather as part of the definition of a quantum bilinear program} However, more general constraint sets can also be studied with our approach. 

In particular, equality constraints can be already included into the free algebra. For example, let $q$ be an irreducible polynomial with variables in $\Sigma_\infty$, such as $q(z) = z^2-z$. The requirement that $q(z_i)=0$, $q(y_j)=0$, $i =1,\ldots,N$, $j=1,\ldots,M$, then corresponds to allowing only projection valued operators. If we denote by $\langle q \rangle$ the ideal in $\Sigma_\infty$ generated by $q$, then we can form the quotient *-algebra $\Sigma_\infty / \langle q \rangle$ which intuitively can be understood as starting with the free *-algebra $\Sigma_\infty$ and then imposing the constraint $q$. We can adopt our procedure for deriving the programs~\eqref{eq:sdp_odd} to this new algebra, by defining the bilinear form~\eqref{eq:bilinear_form} on the new algebra $\Sigma_\infty / \langle q \rangle$ and then following the same procedure as before. However, since the simple monomials are not longer a basis for this quotient algebra, the derivation of levels now relies on first obtaining a monomial basis for $\Sigma_\infty / \langle q \rangle$. This can be achieved if a finite Gr\"obner basis exists and is efficiently computable, as already explained in~\cite[Section 3.5]{PV09}. Alternatively, the equality constraints can also be achieved by requiring that certain matrix elements of $\Omega$ are identified with each other. For example, for the constraint above we would have
\begin{align}\label{eq:}
\Omega_{u,v\circ(i)\circ(i)\circ v} = \Omega_{u,v\circ(i)\circ v}\,,
\end{align}
for words $u,v \in \Sigma_\infty$ and $i=1,\ldots,N+M$. 

Apart from adding polynomial equality constraints, also generalizations concerning the objective functions are possible. Up to now, we only considered the case of bilinear terms. However, terms which are linear in just one variable or constant can be added if we allow for the objective matrix $A$ to have also support on words involving the empty word $\emptyset$. That is, objective functions of the form
\begin{align}
  \label{eq:}
  \sum_{\alpha,\beta} A_{\alpha,\beta} \Omega^n_{\alpha,\beta} + \sum_\alpha a_\alpha \Omega^n_{\alpha,\emptyset} + \sum_\beta b_\beta \Omega^n_{\emptyset,\beta} + c \Omega^n_{\emptyset,\emptyset}
\end{align}
fit into our framework. They correspond to optimizing a functional not only depending on the (quantum) correlations, but also on the marginal distributions.


\section{Entanglement-assisted noisy channel coding}\label{app:coding}

Here we show that for the channel $Z$ defined in Section~\ref{sec:channel}, we have
\begin{align}
\channel^{*}(Z, 1) \geq \frac{1}{2} + \frac{1}{\sqrt{6}} \approx 0.908\,.
\end{align}
For that, we give a quantum protocol using a four dimensional maximally entangled state
\begin{align}
\ket{\psi}:=\frac{1}{2} \sum_{i \in [4]} \ket{i}\otimes\ket{i}\,.
\end{align}
For sending the bit $0$, the sender performs a measurement in the computational basis $E(x|0) = \proj{x}$ and for sending the bit $1$, the sender performs a measurement in the rotated basis $E(x|1) = U \proj{x} U^{\dagger}$ with
\begin{align}
U = \frac{1}{\sqrt{3}} 
\left(
\begin{smallmatrix}
0 & -1 & -1 & 1\\
1 & 0 & 1 & 1 \\
-1 & 1 & 0 & 1 \\
-1 & -1 & 1 & 0
\end{smallmatrix}
\right)\,.
\end{align}
The possible outputs of the channel can be labeled by subsets of the inputs of size $2$. We can write the success probability as
\begin{align}
&\frac{1}{6}  \sum_{x \in [4], x' \neq x} \bra{\psi} \proj{x} \otimes D(0|\{x,x'\}) \ket{\psi} +\frac{1}{6}  \sum_{x \in [4], x' \neq x} \bra{\psi} U\proj{x}U^{\dagger} \otimes (\id - D(0|\{x,x'\}) ) \ket{\psi} \\
&= \frac{1}{2} + \frac{1}{6} \sum_{x \in [4], x' \neq x} \bra{\psi} (\proj{x} - U \proj{x} U^{\dagger}) \otimes D(0|\{x,x'\}) \ket{\psi} \\
&= \frac{1}{2} + \frac{1}{6} \cdot 2 \sum_{\{x,x'\} \in \binom{4}{2}} \bra{\psi} \left(\frac{\proj{x} + \proj{x'}}{2} - U \frac{\proj{x} + \proj{x'}}{2} U^{\dagger} \right) \otimes D(0|\{x,x'\})  \ket{\psi}\,.
\end{align}
By choosing $D(0|\{x,x'\})$ to be an optimal measurement to distinguish between the states
\begin{align}
\frac{1}{2}\Big(\proj{x} + \proj{x'}\Big)\quad\mathrm{and}\quad\frac{1}{2} U\Big(\proj{x} + \proj{x'}\Big)U^{\dagger}\,,
\end{align}
we get a success probability of
\begin{align}
\frac{1}{2} + \frac{1}{6}\cdot\frac{1}{4}\sum_{\{x,x'\} \in \binom{4}{2}} 
\left\|\frac{\proj{x} + \proj{x'}}{2} - U \frac{\proj{x} + \proj{x'}}{2} U^{\dagger} \right\|_1 \ = \frac{1}{2} + \frac{1}{\sqrt{6}}\,.
\end{align} 



\section{Quantum-Proof Randomness Extractors}\label{app:extractor}

Here we give the missing proofs for the claims in Section~\ref{sec:randomness}. The full first level of our SDP hierarchy~\eqref{eq:generic-first-level} for quantum-proof randomness extractors is as follows:
\begin{align}\label{eq:sdp_extractor}
\begin{aligned}
\mathrm{Err}^{\sdp_1}(\mathsf{Ext},k)=\;&\underset{\Omega^1}{\text{maximize}}
& & \frac{1}{2^d} \sum_{i,(s,j)} \left[ \delta_{f_s(i)=j} - \frac{1}{2^{m}}\right] \Omega^1_{(i),(s,j)} \\
& \text{subject to}
& & \Omega^1  \in \mathrm{Pos}(1+2^n + 2^{d+m})\\
& & &\Omega^1_{w,w'}\geq0\quad\forall w,w'\in \Sigma_1\\
& & &\Omega^1_{\emptyset,\emptyset} = 1,\;\Omega^1_{\emptyset,w}=\sum_{i}\Omega^1_{(i),w}\quad\forall w\in \Sigma_1\\
& & & 2^{-k} \Omega^1_{\emptyset,w}\geq\Omega^1_{(i),w}\quad \forall i\in \left[2^n\right],\;\forall w\in\Sigma_1\\
& & &\Omega^1_{\emptyset,w}\geq\Omega^1_{(s,j),w}\quad \forall (s,j)\in \left[2^{d+m}\right],\;\forall w\in\Sigma_1\\
& & &2^{-2k}+\Omega^1_{(i),(i')}\geq2^{-k}\Omega^1_{(i),\emptyset}+2^{-k}\Omega^1_{\emptyset,(i')}\quad\forall i,i'\in \left[2^n\right]\\
& & &1+\Omega^1_{(s,j),(s',j')}\geq\Omega^1_{\emptyset,(s',j')}+\Omega^1_{(s,j),\emptyset}\quad\forall (s,j),(s',j')\in\left[2^{d+m}\right]\\
& & &2^{-k}+\Omega^1_{(i),(s,j)}\geq2^{-k}\Omega^1_{\emptyset,(s,j)}+\Omega^1_{(i),\emptyset}\quad\forall i\in\left[2^n\right],\;\forall (s,j)\in\left[2^{d+m}\right]\,.
\end{aligned}
\end{align}
The upper bound~\eqref{eq:sdp1_extractor} is then immediate by ignoring some constraints.

\begin{proof}[Proof of Theorem~\ref{thm:extractor}]
The ideas for the proof are from~\cite[Theorem 5]{berta15}. We first prove~\eqref{eq:extractor_thm1}. For that we relax the positivity constraint in~\eqref{eq:sdp1_extractor} from
\begin{align}
\Omega^1_{(s,j),(s,j)}\geq\quad\mathrm{to}\quad\Omega^1_{(s,j),(s,j)}\geq-1\quad\forall (s,j)\in\left[2^{m+d}\right]\,,
\end{align}
and ignore some of the other constraints in~\eqref{eq:sdp1_extractor} leading to,
\begin{align}
\begin{aligned}
\mathrm{Err}^{\overline{\sdp}_1}(\mathsf{Ext},k)\leq\;&\underset{\Omega^1}{\text{maximize}}
& & \frac{1}{2^d} \sum_{i,(s,j)} \left[ \delta_{f_s(i)=j} - \frac{1}{2^{m}}\right] \Omega^1_{(i),(s,j)} \\
& \text{subject to}
& & \Omega^1  \in \mathrm{Pos}(1+2^n + 2^{d+m})\\
& & &0\leq\Omega^1_{(i),(i')}\leq2^{-k} \sum_i\Omega^1_{(i),(i')}\quad\forall i,i'\in\left[2^n\right]\\
& & &\sum_i\Omega^1_{(i),(i')}\leq2^{-k}\quad\forall i'\in\left[2^n\right]\\
& & &\sum_{i,i'}\Omega^1_{(i),(i')}=1\\
& & &-1\leq\Omega^1_{(s,j),(s,j)}\leq1\quad\forall (s,j)\in\left[2^{m+d}\right]\,.
\end{aligned}
\end{align}
Moreover, we write $\Omega^1  \in \mathrm{Pos}(1+2^n + 2^{d+m})$ as a Gram matrix:
\begin{align}
\Omega^1_{u,u'}=:\vec{a}_u\cdot\vec{a}_{u'},\;\Omega^1_{u,v}=:\vec{a}_u\cdot\vec{b}_v,\;\Omega^1_{v,v'}=:\vec{b}_v\cdot\vec{b}_{v'}\quad\forall u,u'\in\left[2^n\right]\cup\{\emptyset\},\;\forall v,v'\in\left[2^{m+d}\right]\cup\{\emptyset\}\,.
\end{align}
An application of the Cauchy-Schwarz inequality then easily reveals that the optimal choice for $\vec{b}_{(s,j)}$ is
\begin{align}
\vec{b}_{(s,j)}=\frac{\sum_i\left[ \delta_{f_s(i)=j}-\frac{1}{2^m} \right]\vec{a}_i}{\left\| \sum_i\left[\delta_{f_s(i)=j}-\frac{1}{2^m}\right]\right\|_2}\,.
\end{align}
Thus, the upper bound program becomes
\begin{align}\label{eq:mbound_sdp}
\begin{aligned}
\mathrm{Err}^{\overline{\sdp}_1}(\mathsf{Ext},k)\leq\;&\underset{\vec{a}_i}{\text{maximize}}
& & \frac{1}{2^d} \sum_{(s,j)}\left\| \sum_i \left[ \delta_{f_s(i)=j} - \frac{1}{2^{m}}\right]\vec{a}_i\right\|_2\\
& \text{subject to}
& &0\leq\vec{a}_i\cdot\vec{a}_{i'}\leq2^{-k}\sum_i\vec{a}_i\cdot\vec{a}_{i'}\quad\forall i,i'\in \left[2^n\right]\\
& & &\sum_i\vec{a}_i\cdot\vec{a}_{i'}\leq2^{-k}\quad\forall i'\in \left[2^n\right]\\
& & &\sum_{i,i'}\vec{a}_i\cdot\vec{a}_{i'}=1\,.
\end{aligned}
\end{align}
Again using the Cauchy-Schwarz inequality we can write
\begin{align}
\frac{1}{2^d} \sum_{(s,j)} \left\| \sum_{i} \left[\delta_{f_s(i)=j} - \frac{1}{2^m} \right] \vec{a}_i \right\|_2\leq\sqrt{ \frac{1}{2^d} \sum_{(s,j)} \left\| \sum_{i} \left[\delta_{f_s(i)=j} - \frac{1}{2^m} \right] \vec{a}_i \right\|_{2}^2}  \sqrt{2^m} \ .
\end{align}
Letting (again),
\begin{align}
\Omega^1_{(i),(i')}=\vec{a}_i\cdot\vec{a}_{i'}\quad\mathrm{and}\quad\bar{\Omega}^1_{(i)}\eqdef\sum_{i'}\Omega^1_{(i),(i')}\,,
\end{align}
we look at the expression
\begin{align}
\frac{1}{2^d} \sum_{(s,j)} \left\| \sum_{i} \left[\delta_{f_s(i)=y} - \frac{1}{2^m} \right]\vec{a}_i \right\|_{2}^2&=\frac{1}{2^d}\sum_{(s,j)}\sum_{i,i'}\left[\delta_{f_s(i)=j} - \frac{1}{2^m} \right]\cdot\left[\delta_{f_s(i')=j} - \frac{1}{2^m} \right]\Omega^1_{(i),(i')}\\
&\leq\frac{1}{2^d}\sum_{(s,j)}\sum_i\bar{\Omega}^1_{(i)}\left|\sum_{i'}\left[\delta_{f_s(i')=j} - \frac{1}{2^m} \right]\frac{\Omega^1_{(i),(i')}}{\bar{\Omega}^1_{(i)}}\right|\\
&\leq\frac{1}{2^d}\sum_{(s,j)}\left|\sum_{i}\delta_{f_s(i)=j}2^{-k} - \frac{1}{2^m}\right|\,,
\end{align}
where we made use of the constraints in~\eqref{eq:mbound_sdp} for the last inequality. Going back to the error $\mathrm{Err}(\mathsf{Ext},k)$ as in~\eqref{eq:extractor_1norm} we conclude the claim.

We now prove~\eqref{eq:extractor_thm2}. We upper bound $\mathrm{Err}^{\overline{\sdp}_1}(\mathsf{Ext},k)$ by forgetting several constraints and then apply Grothendieck's inequality (see Lemma~\ref{lem:grothendieck} below):
\begin{align}
\mathrm{Err}^{\overline{\sdp}_1}(\mathsf{Ext},k) &\leq \max \left\{ \frac{1}{2^d} \sum_{i,(s,j)} \left(\delta_{f_s(i) = j} - \frac{1}{2^m} \right) \vec{a}_{i} \cdot \vec{b}_{(s,j)} : \left\| \vec{a}_{i} \right\|_2 \leq 2^{-k}, \left\| \vec{b}_{(s,j)} \right\|_{2} \leq 1 \right\} \\
&\leq K_G \max \left\{ \frac{1}{2^d} \sum_{i,(s,j)} \left(\delta_{f_s(i) = j} - \frac{1}{2^m} \right) a_i b_{(s,j)} : |a_i| \leq 2^{-k}, | b_{(s,j)} | \leq 1 \right\} \\
&\leq K_G \max \left\{ \frac{1}{2^d} \sum_{(s,j)} \left| \sum_{i} \left(\delta_{f_s(i) = j} - \frac{1}{2^m} \right) a_i \right| : |a_i| \leq 2^{-k} \right\}\,.
\end{align}
We partition the set of $i \in\left[2^n\right]$ into $\{ i : a_i \geq 0\}$ and $\{ i : a_i < 0\}$, and write 
\begin{align}
\left| \sum_{i} \left(\delta_{f_s(i) = j} - \frac{1}{2^m} \right) a_i \right|
&\leq \left| \sum_{i : a_i \geq 0} \left(\delta_{f_s(i) = j} - \frac{1}{2^m} \right) a_i \right|+\left|\sum_{i : a_i < 0} \left(\delta_{f_s(i) = j} - \frac{1}{2^m} \right) (-a_i) \right|\,.
\end{align}
Let us write
\begin{align}
\alpha_+ \eqdef \sum_{i : a_i \geq 0} a_{i}\,.
\end{align}
Now if $\alpha_{+} \geq 1$, then we define
\begin{align}
p_{+}(i)\eqdef\frac{\max\{a_i,0\}}{\alpha_+}\,.
\end{align}
Observing that $\alpha_+ \leq 2^{n-k}$, we have
\begin{align}
\frac{1}{2^d} \sum_{(s,j)} \left| \sum_{i : a_i \geq 0} \left(\delta_{f_s(i) = j} - \frac{1}{2^m} \right) a_i \right|&= \alpha_+ \cdot \frac{1}{2^d} \sum_{(s,j)} \left| \sum_{i} \left(\delta_{f_s(i) = j} - \frac{1}{2^m} \right) p_{+}(x) \right| \\
&\leq 2\alpha_{+} \mathrm{Err}(\mathsf{Ext},k+\log(\alpha_+)) \\
&\leq 2\cdot2^{n-k} \mathrm{Err}(\mathsf{Ext},k)\,,
\end{align}
with the error $\mathrm{Err}(\mathsf{Ext},k)$ as in~\eqref{eq:extractor_1norm}. Otherwise, if $\alpha_{+} < 1$, then we define
\begin{align}
p_{+}(i)\eqdef\max\{a_i, 0\} + (1-\alpha_{+})2^{-n}\,.
\end{align}
We have
\begin{align}
&\frac{1}{2^d} \sum_{(s,j)} \left| \sum_{i : a_i \geq 0} \left(\delta_{f_s(i) = j} - \frac{1}{2^m} \right) a_i \right|\notag\\
&= \frac{1}{2^d} \sum_{(s,j)} \left| \sum_{i} \left(\delta_{f_s(i) = j} - \frac{1}{2^m} \right) (p_{+}(i) - (1-\alpha_+)2^{-n}) \right| \\
&\leq \frac{1}{2^d} \sum_{(s,j)} \left| \sum_{i} \left(\delta_{f_s(i) = j} - \frac{1}{2^m} \right) p_{+}(i) \right| + (1-\alpha_+) \frac{1}{2^d} \sum_{(s,j)} \left| \sum_{i} \left(\delta_{f_s(i) = j} - \frac{1}{2^m} \right) 2^{-n} \right|  \\
&\leq 2\mathrm{Err}(\mathsf{Ext},k-1) + 2(1-\alpha_+) \mathrm{Err}(\mathsf{Ext},n)\,.
\end{align}
With a similar argument for the set $\{ i : a_i < 0\}$, we reach the bound
\begin{align}
&\frac{1}{2^d} \sum_{(s,j)} \left| \sum_{i} \left(\delta_{f_s(i) = j} - \frac{1}{2^m} \right) a_i \right| \\
&\leq 2\max\Big\{ 2 \cdot 2^{n-k} \mathrm{Err}(\mathsf{Ext},k), \mathrm{Err}(\mathsf{Ext},k-1) + \mathrm{Err}(\mathsf{Ext},n) + 2^{n-k}\mathrm{Err}(\mathsf{Ext},k),\notag\\
&\qquad\qquad\quad2\mathrm{Err}(\mathsf{Ext},k-1) + (1-\alpha_+ - \alpha_-) \mathrm{Err}(\mathsf{Ext},n) \Big\} \\
&\leq6\cdot2^{n-k} \mathrm{Err}(\mathsf{Ext},k-1)\,.
\end{align}
From this we conclude the claim
\begin{align}
\mathrm{Err}^{\overline{\sdp}_1}(\mathsf{Ext},k)\leq6\cdot2^{n-k}\mathrm{Err}(\mathsf{Ext},k-1)\,.
\end{align}
\end{proof}

\begin{lemma}[Grothendieck's inequality]\label{lem:grothendieck}
For any real matrix $\{A_{ij}\}$, we have
\begin{align}
\max \left\{ \sum_{i,j} A_{ij} \vec{a}_{i} \cdot \vec{b}_j : \left\| \vec{a}_{i} \right\|_2 \leq 1, \left\| \vec{b}_j \right\|_2 \leq 1 \right\}\leq K_G \cdot \max \left\{ \sum_{i,j} A_{ij} a_i b_j : a_i, b_j \in \bR, |a_i| \leq 1, |b_j| \leq 1 \right\}\,.
\end{align}
\end{lemma}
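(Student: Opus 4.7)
The plan is to combine the Goemans--Williamson random hyperplane rounding with Krivine's vector transformation, which is the classical route to Grothendieck's inequality (cf.~Pisier's survey on the subject). First, by a standard convexity and scaling argument, one reduces to the case where all the vectors $\vec{a}_i, \vec{b}_j$ appearing on the left have unit Euclidean norm and where the scalars $a_i, b_j$ on the right are restricted to $\{-1,+1\}$. The task thus becomes: given unit vectors $\vec{a}_i, \vec{b}_j$ in a Hilbert space, produce $\pm 1$-valued random variables $a_i, b_j$ whose expected bilinear form is a fixed positive multiple of $\vec{a}_i \cdot \vec{b}_j$, so that averaging recovers the sum $\sum_{i,j}A_{ij}\vec{a}_i\cdot\vec{b}_j$ up to a universal constant.

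The key tool is Grothendieck's identity: for unit vectors $u,v$ in a real Hilbert space and a standard Gaussian vector $g$ in that space,
\begin{align*}
\mathbb{E}\bigl[\mathrm{sign}(g\cdot u)\,\mathrm{sign}(g\cdot v)\bigr]=\tfrac{2}{\pi}\arcsin(\langle u,v\rangle).
\end{align*}
Applied directly to $\vec{a}_i, \vec{b}_j$, this would produce the wrong nonlinear expression $\tfrac{2}{\pi}\arcsin(\vec{a}_i\cdot\vec{b}_j)$ in place of a constant multiple of the inner product. Krivine's fix is to preprocess the vectors: one constructs, in an auxiliary Hilbert space, mappings $S$ and $T$ from the unit sphere to the unit sphere such that for a fixed constant $c\in(0,\pi/2]$,
\begin{align*}
\langle S(\vec{a}_i),T(\vec{b}_j)\rangle=\sin\bigl(c\,\vec{a}_i\cdot\vec{b}_j\bigr).
\end{align*}
Applying Grothendieck's identity to $S(\vec{a}_i), T(\vec{b}_j)$ then gives $\tfrac{2}{\pi}\arcsin(\sin(c\,\vec{a}_i\cdot\vec{b}_j))=\tfrac{2c}{\pi}\,\vec{a}_i\cdot\vec{b}_j$ in expectation, so setting $a_i\eqdef\mathrm{sign}(g\cdot S(\vec{a}_i))$ and $b_j\eqdef\mathrm{sign}(g\cdot T(\vec{b}_j))$ and averaging over $g$ yields the inequality with constant $K_G\leq\pi/(2c)$.

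The main obstacle, and the technical heart of the proof, is realising the map $(u,v)\mapsto\sin(c\,\langle u,v\rangle)$ as an honest inner product of unit vectors $S(u), T(v)$ in some Hilbert space. I would expand $\sin(cx)=\sum_{k\geq 0}(-1)^k c^{2k+1}x^{2k+1}/(2k+1)!$ and build $S(u)$ and $T(v)$ inside a direct sum of two copies of the symmetric tensor algebra over the ambient Hilbert space, housing the even-$k$ contributions in one summand and the odd-$k$ contributions in the other; the alternating sign $(-1)^k$ is absorbed by flipping the orientation of the inner product on the second summand (equivalently, by using a complex phase and then taking a real part). Normalising $S(u)$ and $T(v)$ to unit vectors requires the sum-of-squares of coefficients to converge at $\langle u,v\rangle=1$, producing a constraint of the form $\sinh(c)\leq 1$; this pins the optimum to $c=\ln(1+\sqrt{2})$ and delivers Krivine's bound $K_G\leq\pi/(2\ln(1+\sqrt{2}))$. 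Once this transformation is in hand, assembling the proof is a short calculation: reduce to the sphere and to $\pm 1$-valued scalars, apply the rounding to the transformed vectors, take expectation over $g$, and read off the constant.
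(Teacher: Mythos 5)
The paper does not prove this lemma: Grothendieck's inequality is stated as a classical external fact (Lemma~\ref{lem:grothendieck}) and used as a black box in the proof of Theorem~\ref{thm:extractor}. There is therefore no "paper proof" to compare against. Your proposal is a correct (and standard) outline of Krivine's proof of the inequality: reduce by homogeneity and convexity to unit vectors on the left and $\pm 1$-valued scalars on the right; apply Grothendieck's identity $\mathbb{E}[\operatorname{sign}(g\cdot u)\operatorname{sign}(g\cdot v)]=\tfrac{2}{\pi}\arcsin(u\cdot v)$ after preprocessing the vectors with Krivine's transformation $\langle S(u),T(v)\rangle=\sin\bigl(c\,u\cdot v\bigr)$; read off $K_G\leq\pi/(2c)$.

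Two small imprecisions worth flagging. First, the power series $\sin(cx)=\sum_k(-1)^kc^{2k+1}x^{2k+1}/(2k+1)!$ always converges; the constraint on $c$ does not come from convergence but from the requirement that $S(u)$ and $T(v)$ be \emph{unit} vectors. When you place $u^{\otimes(2k+1)}$ in the $k$-th summand with coefficient of modulus $\sqrt{c^{2k+1}/(2k+1)!}$, you compute $\|S(u)\|^2=\sum_k c^{2k+1}/(2k+1)!=\sinh(c)$, and the normalisation that yields the stated bound is $\sinh(c)=1$, i.e.\ $c=\ln(1+\sqrt{2})$; the weaker $\sinh(c)\leq 1$ only admits a construction if one pads with an auxiliary orthogonal coordinate, and the optimal choice is still equality. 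Second, "flipping the orientation of the inner product on the second summand" is slightly loose; the cleaner bookkeeping is to put the alternating sign $(-1)^k$ into the coefficients of $T(v)$ (and not $S(u)$), which leaves both norms unchanged but produces the sign in the cross inner product. With these adjustments your sketch is exactly Krivine's argument and yields $K_G\leq\pi/(2\ln(1+\sqrt{2}))$, which more than suffices for the qualitative statement used in the paper.
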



\section*{Acknowledgments}
We thank Hamza Fawzi and Thomas Vidick for helpful discussions. We would also like to thank Alhussein Fawzi and Hamza Fawzi for their help with writing and running MATLAB code. Part of this work was done while VBS was visiting the \'Ecole Normale Sup\'erieure de Lyon and part of it while visiting the Institute for Quantum Information and Matter at Caltech, we thank John Preskill and Thomas Vidick for their hospitality. MB acknowledges funding provided by the Institute for Quantum Information and Matter, an NSF Physics Frontiers Center (NFS Grant PHY-1125565) with support of the Gordon and Betty Moore Foundation (GBMF-12500028). Additional funding support was provided by the ARO grant for Research on Quantum Algorithms at the IQIM (W911NF-12-1-0521). OF acknowledges support by the LABEX MILYON (ANR-10-LABX-0070) of Universit\'e de Lyon, within the program ``Investissements d'Avenir" (ANR-11-IDEX-0007) operated by the French National Research Agency (ANR).
VBS acknowledges partial support by the EU project Randomness and Quantum Entanglement (RAQUEL) and the NCCR QSIT. 

\end{document}